\newcommand*\LyXZeroWidthSpace{\hspace{0pt}}
\numberwithin{equation}{section}
\numberwithin{figure}{section}
\theoremstyle{plain}
\newtheorem{thm}{\protect\theoremname}
\theoremstyle{definition}
\newtheorem{defn}[thm]{\protect\definitionname}
\setlist[enumerate]{leftmargin=*,label=(\roman*),align=left}
\setlist[itemize]{leftmargin=*,align=left}
\newcommand{\xyR}[1]{ \makeatletter
\xydef@\xymatrixrowsep@{#1} \makeatother} 
\newcommand{\xyC}[1]{ \makeatletter
\xydef@\xymatrixcolsep@{#1} \makeatother} 
\newcommand{\ra}{\longrightarrow}
\newcommand{\xra}[1]{\xrightarrow{\ \ #1\ \ }} 
\newcommand{\field}[1]{\mathbb{#1}}
\newcommand{\R}{\field{R}} 
\newcommand{\N}{\field{N}} 
\renewcommand{\phi}{\varphi}
\newcommand{\eqUp}{\displaystyle \mathop {=}} 
\newcommand{\geUp}{\displaystyle \mathop {\ge}} 
\DeclareMathOperator{\act}{ac}
\DeclareMathOperator{\stateNeigh}{n}
\newcommand{\tst}{t_{\rm st}}
\newcommand{\tend}{t_{\rm end}}
\newcommand{\ag}[1]{{\rm ag}(#1)}
\newcommand{\pa}[1]{{\rm pa}(#1)}
\newcommand{\pr}[1]{{\rm pr}(#1)}
\newcommand{\tist}[1]{t_{#1}^{\rm s}}
\newcommand{\tong}[1]{t_{#1}^{\rm o}}
\newcommand{\tarr}[1]{t_{#1}^{\rm a}}
\newcommand{\tsubs}{t_{\rm s}}
\newcommand{\tsuba}{t_{\rm a}}
\newcommand{\tsubo}{t_{\rm o}}
\providecommand{\definitionname}{Definition}
\providecommand{\theoremname}{Theorem}
\begin{document}

\title[A mathematical definition of complex adaptive systems]{A mathematical definition of complex adaptive system as interaction
space}
\author{Paolo Giordano}
\address{Paolo Giordano, Faculty of Mathematics, University of Vienna, AT.}
\thanks{This research was funded in whole or in part by the Austrian Science
Fund (FWF) 10.55776/PAT9221023. For open access purposes, the author
has applied a CC BY public copyright license to any author-accepted
manuscript version arising from this submission}
\begin{abstract}
We define a mathematical notion of complex adaptive system by following
the original intuition of G.K.~Zipf about the \emph{principle of
least effort}, an intuitive idea which is nowadays informally widespread
in complex systems modeling. We call \emph{generalized evolution principle
}this mathematical notion of interaction spaces theory. Formalizing
and generalizing Mandelbrot's ideas, we also prove that a large class
of these systems satisfy a power law. We finally illustrate the notion
of complex adaptive system with theorems describing a Von Thünen-like
model. The latter can be easily generalized to other complex systems
and describes the appearance of emergent patterns. Every notion is
introduced both using an intuitive description with lots of examples,
and using a modern mathematical language.
\end{abstract}

\maketitle
\tableofcontents{}

\section{\label{sec:Complex-adaptive-systems}Complex adaptive systems and
emergent patterns following Zipf's idea}

It is widely recognized that complex adaptive systems (CAS) are among
the most interesting types of complex systems (CS) (see, e.g., \cite{Mi-Pa07,MiNe}),
and there are several attempts to provide a precise definition of
CAS, see e.g.~\cite{Ni-Pr77,Ba-Cr94,Hak04,Yu-So14}.

Due to its unifying properties, see \cite{Gio24_IS}, interaction
spaces (IS) theory can provide the appropriate context for a \emph{general
mathematical definition} of CAS, and the understanding of general
dynamical laws governing these systems. In the present paper, we use
all the notions of IS theory introduced in \cite{Gio24_IS}.

The main idea is to try a mathematical formalization of the original
ideas expressed by G.K.~Zipf in \cite{Zip49}, because they seem
intuitively clear, meaningful and applicable to a large class of CAS,
where some kind of optimization and, at the same time, of information
sharing between interacting entities in a CS is nowadays informally
frequently used, see e.g.~\cite{Bak97,Cru12,Gab99,Hak04,Man53,Mi-Pa07,JaBaCr,Ni-Pr77,Sor06}
and references therein.

In \cite{Zip49}, the appearing of a power law is connected to what
we think is an adapting behavior of the system. Zipf's \emph{principle
of least effort} intuitively explains this adaptation as a result
of \emph{two} opposing processes: \emph{unification} and \emph{diversification}.
We will try to intuitively explain this principle through several
examples listed below. Clearly, both these terms in \cite{Zip49}
are sufficiently imprecise and could hence be misinterpreted. The
mathematical formalization we present here also aims to gain a more
clear understanding of these processes.

In our insight, unification processes are related to interactions
that try to \emph{decrease} convenient \emph{costs} (possibly meant
in an abstract way: e.g.~loss of profits, loss of common goods, probability
to get hungry, probability to loose reproductive possibilities, hormones
such as cortisol or dopamine, etc.), whereas diversification processes
are linked to \emph{long term changes} of suitable interactions, i.e.~to
the increasing of the \emph{causal information shared through the
goods} exchanged between agents and patients of these interactions,
see \cite{Gio24_IS} for an explanation of these terms of IS theory.
It is the implementation of these interactions and the most diversified
exchange of fluxes of goods that enable the adaptive population to
be resilient \emph{and} keep a low value of costs.

To start a first understanding and a preliminary intuitive validation
of the subsequent mathematical definition, we can keep in mind the
following examples:
\begin{enumerate}[label=\arabic*)]
\item In a natural language, unification processes drift to shorten most
frequently used words (or better: frequently used sounds, see \cite{BHGK12});
diversification ones make evolve the language towards longer and specialized
words, \cite{CoMi,Zip49}.
\item In cities and their markets development, unification tends to bring
near people so as to decrease suitable costs of living; diversification
tends to use all the possible living locations so as to approach the
appropriate rent costs, \cite{Gab99}.
\item In natural selection, unification forces push giraffes to search for
eatable trees (see e.g.~\cite{CadT07} and references therein); diversification
can select all the best genetic codes that allow for a longer neck,
\cite{Dan-etal15}. We recall that a costs decreasing process (unification)
can cause an evolution into different phenotypes (diversification),
see \cite{HeDo13}. More generally, natural selection seems to result
by the dynamics of two processes: representation of biological information
as chemical properties and control flow (diversification), and the
energetic constraints limiting the maintenance of that information
(unification), \cite{Smi-I,Smi-II,Smi-III}. See \cite{SimSch96}
for an alternative evolutionary explanation of long neck in giraffes.
\item Determining the direction to navigate to a safe place, such as a home
or nest, is a fundamental behavior for all complex animals and a crucial
first step in navigation. We can say that unification mental processes
evaluate costs related to the achievement of these goals, whereas
diversification ones are related to judgments based on a previously
learned or planned behavior, \cite{Cha-etal15}.
\item Companies with a longer life span are able not only to decrease costs
and increase profits (unification), but are also able to adapt to
their complex environment by implementing long-term robustness. The
latter are often realized through diversification processes such as:
maintain heterogeneity of people, ideas, and endeavors, and preserve
redundancy among components, \cite{ReLeUe16}.
\item In autism (but also in schizophrenia), a necessary decreasing in high
costs related to social interactions (unification) is sometimes compensated
by higher abilities (diversification) in very specialized or creative
activities, \cite{Pow15,CrStEl10}.
\item The ability to manage costs related to large varieties of goods (unification)
is related to the ability to implement the same stable economic decisions
(interactions) applicable to different goods (diversification), \cite{Xie-Pad16}.
\item Phyllotaxis, the regular arrangement of leaves or flowers around a
plant stem, is an example of developmental pattern formation. Phyllotaxis
is characterized by the divergence angles between the organs, the
most common angle being 137.5°, the golden angle. Different approaches
and hypotheses has been used to model this formation mechanism, see
e.g.~\cite{Kuh07}. In this process, we can see unification forces
related to energy exploitation by each primordium, and diversification
forces that tend to uniformly distribute these energy sources between
old and new primordia.
\item \label{enu:WS}An example of \textbf{non}-adaptive but still complex
system is traders payments of Wall Street employees. It is well known
that this payment has a base salary and a bonus, which is usually
a percentage of trader’s profit. If traders lose, they still get their
base, and only if their loss is great enough, they are fired. However,
they never have to return the money lost by the company due to their
wrong trading. This is a clear financial incentive to be reckless
because it rewards short-term gains (costs which identify unification
forces) without regard to long-term consequences (diversification
forces), \cite{Arn13}.
\item The efficiency of a parliament (unification processes interpreted
as decreasing of suitable costs) can be improved by inserting randomly
selected legislators (increasing of diversification among legislators),
see \cite{PGRSC11}.
\item Whereas classical economic theories prescribe specialization of countries
industrial production, inspection of the country databases of exported
products shows that this is not the case: successful countries are
extremely diversified, in analogy with biosystems evolving in a competitive
dynamical environment. In fact, together with classical and necessary
costs reduction (unification), diversification represents the hidden
potential for development and growth, \cite{TCCGP12}.
\item The evolutionary emergence of an egalitarian attitude in a population
can be explained by using an evolutionary model of group-living individuals
competing for resources and reproductive success (i.e.~unification
as costs to decrease), see \cite{Gav12}. Although the differences
in fighting abilities lead to the emergence of hierarchies where stronger
individuals take away resources from weaker individuals, the logic
of within-group competition implies that each individual benefits
if the transfer of resources from a weaker group member to a stronger
one is prevented. This model shows that this effect can result in
the evolution of a particular behavior causing individuals to interfere
in a bully–victim conflict on the side of the victim. A necessary
condition in this process is a high efficiency of coalitions in conflicts
against the bullies. The egalitarian drive leads to a dramatic reduction
in within-group inequality. Simultaneously, it creates the conditions
for the emergence of inequity aversion via the internalization of
these adapted behavioral rules. All these interactions are general
because they can be applied to different situations. They hence represent
a long-term and diversified improving of society.
\item \label{enu:Instit}The present network of financial exposures among
institutions (e.g.~banks, companies) shows that this system is \textbf{not}
well adapted. The centrality of certain institutions does not allow
the system to be resilient to financial fails of few institutions,
see \cite{BPKTC12}. As we will see later, this is the same mechanism
showed by monopolies and the corresponding lack of diversity in the
system. We could say that the system is flawed because it lacks in
interactions that act when a node of the financial network fail: a
more diversified network would show a greater resilience because of
a lower centrality of these nodes. It is also clear that a globally
directed taxation system can prevent the formation of such a centralized
nodes. Even if these global taxes do not allow a maximization of profits
(which could be clearly higher without this tax), they could be rightly
justified as a measure to prevent global irreparable problems to the
entire system, and hence to the institutions themselves. See also
\cite{Vic10}.
\end{enumerate}
To underline that we are going to give our interpretation of Zipf's
principle of least effort, we prefer to name our mathematical version
\emph{generalized evolution principle} (GEP)\emph{.} This name has
also the merit to link this adapting dynamics to evolutionary theories
that, in our opinion, are well inscribed into it.

Before starting an intuitive description of the GEP, we have first
to specify what is a global state of an IS.

\section{\label{sec:Global-state}Global states of an interaction space}

There are five sources of (possible) randomness acting in an IS $\mathfrak{I}$.
One of them derives from the (possible) stochastic evolution equation,
see \cite[(EE)]{Gio24_IS}. We recall that this also includes the
stochastic behaviour of activation state $\act_{i}^{p}(t)\in[0,1]$
and goods $\gamma_{i}(t)\in R_{i}$, see \cite[Rem.~1~(e)]{Gio24_IS}.
We have this type of stochastic behavior because we do not want to
model some details of the evolution of the interaction $i$, or because
it is not possible at all: it can happens, e.g., that a pedestrian
randomly chooses between two exists which are equally located with
respect to its present position: a deterministic model could not be
reasonable for a non-trivial large class of pedestrians.

Three other sources of randomness are the occurrence times $\tist{i}(t)$,
$\tarr{i}(t)$ and $\tong{i}(t)$, see \cite[Sec.~3.5.1]{Gio24_IS}.

The last one, is due to neighborhoods $\mathcal{N}_{i}(t)\subseteq E$
of interactions $i\in I$, see \cite[(NE)]{Gio24_IS}.

Therefore, in principle, it is possible to define a new probability
space that combines all these (possible) random sources. Elementary
events of this space are of the form $(\tsubs,\tsuba,\tsubo,\mathcal{N},\omega)$,
where $(\tsubs,\tsuba,\tsubo)\in[\tst,\tend]^{3}$ are e.g.~distributed
as explained in \cite[Sec.~3.5.1]{Gio24_IS}, $\mathcal{N}\subseteq E$
are all possible neighbourhoods, probabilistically distributed depending
on modeling choices, and $\omega\in\Omega_{p}$ is distributed as
$P_{p}$ for all patients $p$ according to the evolution equation,
see \cite[(EE)]{Gio24_IS}.

The joint probability that we have to settle on this space of elementary
events is certainly not easy to set. On the one hand, the universal
properties of IS allow us to state that the aforementioned ones are
all the possible sources of randomness that we have to take into account
in several type of models. On the other hand, the causal graph defined
by the activation function, \cite[Sec.~2]{Gio24_IS}, can be of great
help in finding this probability: we can say that the interaction
$i$ is a cause of the interaction $j$ if along any possible solution
of the IS the interaction $i$ always precedes $j$, and if $i$ activates
for $j$ an agent of $j$ (see e.g.~Fig.~\ref{fig:exampleCause-effect}).

\begin{figure}
\noindent \begin{centering}
\includegraphics[scale=0.75]{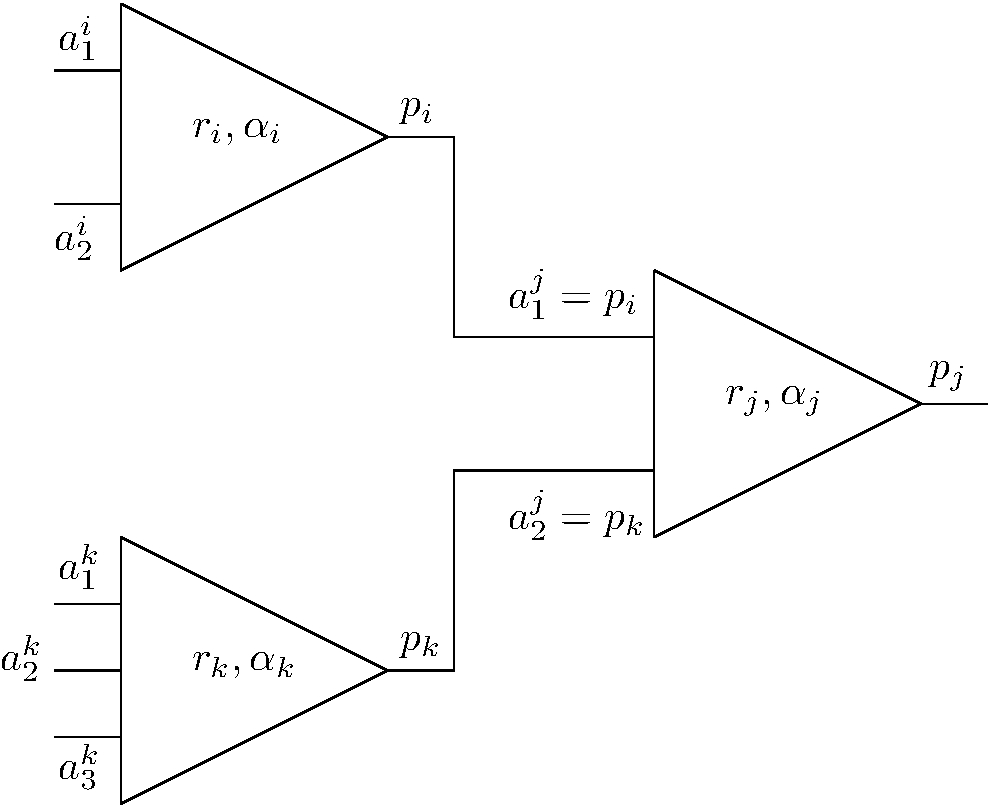}
\par\end{centering}
\caption{\label{fig:exampleCause-effect}An example of interactions in cause-effect
cascade.}
\end{figure}
If the resulting directed cause-effect graph is acyclic, we can interpret
it as a beliefs network and apply the methods of Bayesian networks
to define the joint probability, see e.g.~\cite{Dar09}.

A\emph{ global state space $\bar{M}_{\mathcal{P},J,t}$ of the population
$\mathcal{P}\subseteq E$ and the interactions $J\subseteq I$ up
to time $t\in[\tst,\tend]$ }is given by three components, each being
a subset of all the possible \emph{paths} (states, times, neighbourhoods)
of our IS (recall that $Y^{X}$ is the set of all the functions $f:X\ra Y$
and, to understand, that if the index set $J=\{j_{1},\ldots,j_{n}\}$
is finite, then the product of sets is $\prod_{j\in J}S_{j}=S_{j_{1}}\times\ldots\times S_{j_{n}}$):
\begin{align}
\bar{M}_{\mathcal{P},J,t} & :=\bar{M}_{\mathcal{P},t}^{\text{s}}\times M_{J,t}^{\text{t}}\times M_{J,t}^{\text{n}}\label{eq:globalStatesAll}\\
\bar{M}_{\mathcal{P},t}^{\text{s}} & \subseteq\left(\prod_{e\in\mathcal{P}}\bar{S}_{e}\right)^{[\tst,t]}=:\bar{S}_{\mathcal{P}}^{[\tst,t]}\label{eq:globalStates}\\
M_{J,t}^{\text{t}} & \subseteq\left([\tst,\tend]^{3}\right)^{J\times[\tst,t]}\nonumber \\
M_{J,t}^{\text{n}} & \subseteq\left\{ \mathcal{N}\mid\mathcal{N}\subseteq E\right\} ^{J\times[\tst,t]}\nonumber \\
\bar{M}_{\mathcal{P},J} & :=\bigcup_{t\in[\tst,\tend]}\bar{M}_{\mathcal{P},J,t}.\label{eq:globalStateSpace}
\end{align}
The space $\bar{M}_{\mathcal{P},J}$ is called \emph{global state
space of $\mathcal{P}$ and $J$.}

Recall that the state space $\bar{S}_{e}$ also contains activation
and states of goods, whereas the proper state space $S_{e}$ is a
subspace of $\bar{S}_{e}$ that includes all the other proper state
values, see \cite[(ST)]{Gio24_IS}. The optimization performed by
a CAS occurs in the space $\bar{M}_{\mathcal{P},J,t}$ and hence it
also depends on the choice of such a global state space, see also
Sec.~\ref{subsec:General-steps-to-GEP}.

A conceptual explanation of supersets appearing in \eqref{eq:globalStates}
is as follows: for each interacting entity $e\in\mathcal{P}$ of the
given population, in $\bar{M}_{\mathcal{P},t}^{\text{s}}$ we have
all the state \emph{time functions} $x_{e}:[\tst,t]\ra\bar{S}_{e}$
of our IS; in $M_{J,t}^{\text{t}}$ we can consider all the occurrence
times $\tist{j}(-)$, $\tarr{j}(-)$, $\tong{j}(-):[\tst,t]\ra[\tst,\tend]$
of each interaction $j\in J$; finally, in $M_{J,t}^{\text{n}}$ we
have the neighbourhood $\mathcal{N}_{j}(-):[\tst,t]\ra\left\{ \mathcal{N}\mid\mathcal{N}\subseteq E\right\} $
of $j\in J$ as function of time up to $t$.

Even if it is natural to consider in the global state space all the
possible paths of independent variables we can consider in our IS
model, in the mathematical definition of the GEP Def.~\ref{def:CAS}
we will see why this is important to achieve a greater generality
and flexibility in defining from what the adaptation property must
depend on. The intuitive idea preliminary expressed in the global
space $\bar{M}_{\mathcal{P},J,t}$ is that a population $\mathcal{P}$
can adapt by changing in time its proper state variables, or its activations,
exchanged goods (and hence its cause-effect relations), occurrence
times or even neighbourhoods where interactions $j\in J$ take information
they need to run. We simply use the same symbols but omitting the
subscripts $\mathcal{P}$ and $J$ (e.g.~$\bar{M}_{t}$, $\bar{S}$,
$\bar{M}_{t}^{\text{s}}$, etc.) if $\mathcal{P}=E$ and $J=I$.

The idea to consider only the time interval $[\tst,t]$ up to $t$,
allows us to mathematically define that the system is better adapted
at time $t$ than at time $s$, and hence to distinguish between an
improving dynamic and an emergent pattern, which is a best possible
global state.

On the basis of what we said above, we can assume to have a probability
space $(\Omega^{\text{g}},\mathcal{F}^{\text{g}},P^{\text{g}})$ (the
superscript ``$\text{g}$'' stands for \emph{global}) and three
random processes
\begin{align}
X & :\Omega^{\text{g}}\times[\tst,\tend]\ra\bar{S}\nonumber \\
T & :\Omega^{\text{g}}\times I\times[\tst,\tend]\ra[\tst,\tend]^{3}\label{eq:randVarIS}\\
N & :\Omega^{\text{g}}\times I\times[\tst,\tend]\ra\left\{ \mathcal{N}\mid\mathcal{N}\subseteq E\right\} \nonumber 
\end{align}
representing resp.~the possible state maps of each interacting entity
$e\in E$ (including activations, goods and proper state variables),
the possible occurrence times, and the neighbourhood of each interaction
$i\in I$. In other words, for any fixed elementary event $w\in\Omega^{\text{g}}$,
if we replace everywhere the state map $x_{e}(t)\in\bar{S}_{e}$,
the occurrence times $\tist{i}(t)$, $\tarr{i}(t)$, $\tong{i}(t)\in[\tst,\tend]$,
and the neighborhoods $\mathcal{N}_{i}(t)\subseteq E$ resp.~with
the time functions $X(w)(t)_{e}:=X(\omega,t)_{e}\in\bar{S}_{e}$,
$T_{i}^{\text{s}}(\omega)(t):=T(\omega,i,t)_{1}$, $T_{i}^{\text{a}}(\omega)(t):=T(\omega,i,t)_{2}$,
$T_{i}^{\text{o}}(\omega)(t):=T(\omega,i,t)_{3}$, and $N(\omega,i,t)\subseteq E$,
then all the conditions in the definition of IS are satisfied. For
example, $X(w)(t)_{e}$ satisfies the evolution equation whenever
$e$ is a patient fulfilling the assumptions of \cite[(EE)]{Gio24_IS}.

The process $X=X(\omega,t)_{e}$ is formally a function of three variables
$\omega\in\Omega^{\text{g}}$, $t\in[\tst,\tend]$ and $e\in E$,
but we use flexible notations for the corresponding partial functions
when one or more of these variables are fixed. For example, $X_{e}(t):\omega\in\Omega^{\text{g}}\mapsto X(\omega)(t)_{e}\in\bar{S}_{e}$
is the random state variable of the entity $e\in\mathcal{P}$ at time
$t\in[\tst,\tend]$. Similar notations are also used for the processes
$T$ and $N$.

Choosing a space $\bar{M}_{\mathcal{P},t}^{\text{s}}$ which contains
only constant state maps, we are equivalently considering the adaptation
as occurring in a subspace (e.g.~a manifold) of $\bar{S}_{\mathcal{P}}=\prod_{e\in\mathcal{P}}\bar{S}_{e}$;
this is what we will examine both in Sec.~\ref{sec:Power-law-distribution}
and in Sec.~\ref{subsec:Generalizing-Von-Thunen} below. On the contrary,
an example of IS where the time state function $X(\omega)(-)\in\bar{M}_{\mathcal{P},\tend}^{\text{s}}$
is more important than the value itself $X(\omega)(t)_{e}\in\bar{S}_{e}$
is in the intelligent interpretation of a given text, let us say a
clinical note, where we want to count how much frequently a given
disease $e$ is cited in the text: $X(\omega)(t_{1})_{e},\ldots,X(\omega)(t_{N})_{e}$.
In this example, the variable $t$ represents the passing time while
a given ``user $\omega$'' is reading the text. If we consider only
$t_{1},\ldots,t_{N}\le t$, then $X(\omega)(-)_{e}|_{[\tst,t]}$ can
be use to represent the amount of information the user $\omega$ is
interpreting in the text up to reading time $t$.

In the following, we also always assume that the global state space
$\bar{M}_{\mathcal{P},J,t}$ is chosen so that
\begin{align}
\left(X_{e}(\omega)|_{[\tst,t]}\right)_{e\in\mathcal{P}} & \in\bar{M}_{\mathcal{P},t}^{\text{s}}\nonumber \\
\left(T_{i}(\omega)|_{[\tst,t]}\right)_{i\in J} & \in M_{J,t}^{\text{t}}\qquad\forall\omega\in\Omega^{\text{g}}\,\forall t\in[\tst,\tend]\label{eq:indepPopInt}\\
\left(N_{i}(\omega)|_{[\tst,t]}\right)_{i\in J} & \in M_{J,t}^{\text{n}}\nonumber 
\end{align}
for all populations $\mathcal{P}\subseteq E$ and all families of
interactions $J\subseteq I$. 

Finally, using $X(\omega)(t)_{e}\in\bar{S}_{e}$, we can also define
the random processes of \emph{activation state} and \emph{goods} for
each interaction $i\in I$:
\begin{align*}
\text{AC}_{i}^{e}(\omega)(t) & :=X(\omega)(t)_{e,1,i}\\
\Gamma_{i}(\omega)(t) & :=X(\omega)(t)_{\pr{i},2,i},
\end{align*}
so that $\text{AC}_{i}^{e}(t):\Omega^{\text{g}}\ra[0,1]$ and $\Gamma_{i}(t):\Omega^{\text{g}}\ra R_{i}$.

\section{\label{sec:Intuitive-description-GEP}Intuitive description of the
generalized evolution principle}

We start with an intuitive description of the GEP and with some thoughts
to understand what we have to define in a precise mathematical language.

In an IS, we can have several populations of interacting entities.
Since only some of these populations have to be described as adapting,
we have to talk of the \emph{adaptation of a given} subset (=:\emph{
population}) \emph{of interacting entities}. In other words, it is
not necessarily correct to talk of an entire ``complex adaptive IS''.

Therefore, in an adaptation process, we need to identify an adapting
population $\mathcal{P}$ of interacting entities, $\mathcal{P}\subseteq E$,
and a family of interactions $I_{\mathcal{P}}\subseteq I$ performed
by the population $\mathcal{P}$ and representing how $\mathcal{P}$
is going to \emph{decreases costs and stabilize the adaptation process
through a suitable diversification of its goods}. Interactions in
$I_{\mathcal{P}}$ are called \emph{adaptive interactions}.

\noindent In the following definition, we specify what kind of interactions
we have to consider:
\begin{defn}
\label{def:popInteraction}Let $\mathcal{P}\subseteq E$, we say that
$I_{\mathcal{P}}\subseteq I$ \emph{is a family of interactions of
(the population) }$\mathcal{P}$\emph{ }if at least one agent of each
interaction $i\in I_{\mathcal{P}}$ is a member of the population
$\mathcal{P}$:
\[
\forall i\in I_{\mathcal{P}}\,\exists a\in\ag{i}\cap\mathcal{P}.
\]
\end{defn}

\noindent We now continue the intuitive motivations and description
of the GEP.

Note explicitly that not necessarily interactions of $I_{\mathcal{P}}$
act on entities of the population $\mathcal{P}$, i.e.~not always
$\pa{i}\in\mathcal{P}$ if $i\in I_{\mathcal{P}}$. This opens the
possibility that the population adapts by acting on external entities
and, only after a suitable chain of cause-effect related interactions,
this causes a change in the state of $\mathcal{P}$.

As we will see more precisely below, the adaptation process changes
functions state $X_{\mathcal{P}}:=\left(X_{e}|_{[\tst,s]}\right)_{e\in\mathcal{P}}$
of the population $\mathcal{P}$, or occurrence times $T_{\mathcal{P}}(s):=\left(T_{i}|_{[\tst,s]}\right)_{i\in I_{\mathcal{P}}}$
or neighborhoods $N_{\mathcal{P}}(s):=\left(N_{i}|_{[\tst,s]}\right)_{i\in I_{\mathcal{P}}}$
of its adaptive interactions at time $s$ to a better value at time
$t$, where unification and diversification are improved, i.e.~costs
are lowered and goods are resiliently distributed. We therefore precisely
define when it happens that \emph{at time $t$ the population $\mathcal{P}$
is better adapted than at time $s$}. A particular case will be when
at time $t$ the population is at one of the \emph{best} possible
states, which usually corresponds to a steady or an equilibrium state
and hence to an emergent pattern, even if, in general, not necessarily
an emergent pattern can be really attained as actual state.\\
In the following, we use the notation for the global state
\begin{equation}
Y_{\mathcal{P}}(\omega)|_{[\tst,t]}:=\left(\left(X_{e}(\omega)|_{[\tst,t]}\right)_{e\in\mathcal{P}},\left(T_{i}(\omega)|_{[\tst,t]}\right)_{i\in I_{\mathcal{P}}},\left(N_{i}(\omega)|_{[\tst,t]}\right)_{i\in I_{\mathcal{P}}}\right),\label{eq:Y-state}
\end{equation}
for $\omega\in\Omega^{\text{g}}$ and $t\in[\tst,\tend]$. Note that
$Y_{\mathcal{P}}(\omega)|_{[\tst,t]}\in\bar{M}_{\mathcal{P},I_{\mathcal{P}},t}\subseteq\bar{M}_{\mathcal{P},I_{\mathcal{P}}}$
because of \eqref{eq:indepPopInt}.

Together with the family of interactions $I_{\mathcal{P}}$, for each
global state $y\in\bar{M}_{\mathcal{P},I_{\mathcal{P}}}$, we have
to identify a function
\begin{equation}
C_{y}:\mathcal{P}\ra\R_{\ge0}^{k}\qquad\forall y\in:\bar{M}_{\mathcal{P},I_{\mathcal{P}}}\label{eq:costFunction}
\end{equation}
defined on the population $\mathcal{P}$ and called \emph{unification
costs} \emph{causing }$I_{\mathcal{P}}$\emph{ at the state $y$}:
an adaptive population $\mathcal{P}$ reacts with the interactions
$I_{\mathcal{P}}$ to an increase in these costs $C_{y}=(C_{y}^{1},\ldots,C_{y}^{k})\in\R_{\ge0}^{k}$
by changing the global state $y$ and trying to decrease each cost
$C_{y}^{j}$, $j=1,\ldots,k$. The interactions $i\in I_{\mathcal{P}}$
that allow a decreasing of each component of the cost function represent
the unification processes. Therefore, we are going to define when
$\mathcal{P}$ is adaptive \emph{with respect to a given cost function
$C$ and a family of interactions $I_{\mathcal{P}}$}. Of course,
a population $\mathcal{P}$ can be adaptive with respect to several
cost functions and families of interactions. Note that since the cost
function $C_{y}$ depends on the global state space $y\in\bar{M}_{\mathcal{P},I_{\mathcal{P}},t}$
(the union in \eqref{eq:globalStateSpace} is actually a disjoint
one), we can also consider costs depending on the activations or proper
state variables of $e\in\mathcal{P}$, or on goods, occurrence times
or neighbourhoods of interactions $i\in I_{\mathcal{P}}$, or even
on time $t$.

In general, the adaptation of the population $\mathcal{P}$ does not
occur by changing the state of a single entity $e\in\mathcal{P}$
but of several of them. For this reason, in most cases $C_{y}$ is
actually evaluated by averaging costs of the same type undergone by
each interacting entity. The probability we use to average these costs
is defined on the population $\mathcal{P}$, so that we can talk of
costs paid \emph{by the population} $\mathcal{P}$ with respect to
this probability. In general, this probability also depends on the
global state $y\in\bar{M}_{\mathcal{P},I_{\mathcal{P}}}$: think e.g.~at
the following examples:
\begin{enumerate}[label=\arabic*)]
\item $y=(y^{1},\ldots,y^{n})\in\R^{n}$ are the frequencies $y^{k}\in\R$
of each of $n$ words in an evolving language. If the language is
a CAS, these frequencies are distributed following a power law, and
the average values of the costs in using this language depends on
these frequencies, giving a higher weights to more used words, see
Sec.~\ref{subsec:ExamplesCostsFunct}
\item $y=(x,r)\in\R^{2n}\times\R_{\ge}^{n}$ are locations $x^{j}\in\R^{2}$
and rents $r^{j}\in\R_{\ge0}$ of $j=1,\ldots,n$ companies producing
suitable commodities, and the costs of these companies are averaged
depending on both locations and rents, e.g.~giving higher weights
to more central and expansive locations, see Sec.~\ref{subsec:Generalizing-Von-Thunen}.
\item $y$ is the degree of attention (i.e.~the activation in the language
of IS) that each lecturer $\omega$ is giving to each part of speech
in a given clinical note during its reading. The most adapted readings
we are interested to are those where we have a probability which are
proportional to the degree of attention given to medical terms, and
hence to a lower value of the related average costs.
\end{enumerate}
Therefore, before talking of an adaptation process, for each global
state $y\in\bar{M}_{\mathcal{P},I_{\mathcal{P}}}$ and for each cost
$C_{y}^{j}$, we also need to identify a probability $P_{y}^{j}$
on \emph{$\mathcal{P}$}:
\[
P_{y}^{j}\text{ is a probability on }\mathcal{P}\qquad\forall y\in\bar{M}_{\mathcal{P},I_{\mathcal{P}}}\,\forall j=1,\ldots,k.
\]
It is with respect to $P_{y}^{j}$ that we have to evaluate the expected
value of the cost function $C_{y}^{j}$, and it is this expected value,
and not the single costs experienced by each $e\in\mathcal{P}$, that
have to be decreased in unification interactions. The probability
$P_{y}^{j}$ is called \emph{probability to average unification cost
$C_{y}^{j}$}.

We therefore define the \emph{unification ``forces'' at the state
$y$} by
\begin{equation}
U_{\mathcal{P}}^{j}(y):=-E_{y}^{j}\left[C_{y}^{j}\right]\qquad\forall y\in\bar{M}_{\mathcal{P},I_{\mathcal{P}}}\,\forall j=1,\ldots,k,\label{eq:unificationForcesState}
\end{equation}
where the expected value $E_{y}^{j}[-]$ is computed using $P_{y}^{j}$.
The minus sign in \eqref{eq:unificationForcesState} allows one to
say that these forces are greater when the average costs are lower.
For example, if $\mathcal{P}=\{e_{1},\ldots,e_{N}\}$ is finite, then
\begin{equation}
U_{\mathcal{P}}^{j}(y)=-\sum_{e\in\mathcal{P}}P_{y}^{j}(e)\cdot C_{y}^{j}(e).\label{eq:UnificationFinite}
\end{equation}
Similarly, we can define the \emph{unification forces for the event
$\omega\in\Omega^{\text{g}}$ at time $t\in[\tst,\tend]$} by 
\begin{equation}
U_{\mathcal{P}}^{j}(\omega,t):=U_{\mathcal{P}}^{j}(Y_{\mathcal{P}}(\omega)|_{[\tst,t]}).\label{eq:unificationForcesSample}
\end{equation}
Therefore, if the unification interactions allows the system to pass
from the global state $Y_{\mathcal{P}}(\omega)|_{[\tst,s]}$ to the
better state $Y_{\mathcal{P}}(\omega)|_{[\tst,t]}$, this can be expressed
with
\begin{align}
U_{\mathcal{P}}^{j}(\omega,t) & \ge U_{\mathcal{P}}^{j}(\omega,s)\quad\forall j=1,\ldots,k.\label{eq:greaterUnifForcesTimes}
\end{align}
In addition, the condition that $\bar{y}\in\bar{M}_{\mathcal{P},I_{\mathcal{P}}}$
is the best global state from the point of view of costs, can be expressed
with
\begin{equation}
U_{\mathcal{P}}^{j}(\bar{y})\ge U_{\mathcal{P}}^{j}(\omega,t)\quad\forall\omega\in\Omega^{\text{g}}\,\forall t\in[\tst,\tend]\,\forall j=1,\ldots,k.\label{eq:greaterUnifForcesState}
\end{equation}

The adaptive interactions $i\in I_{\mathcal{P}}$ also realize the
diversification to a decreasing of $C_{y}$, and hence to stabilize
this declining of costs. We measure the \emph{diversification ``forces''}
$D_{I_{\mathcal{P}}}(\gamma)$ with the \emph{information entropy
of the fluxes of goods $\gamma_{i}\in R_{i}$ extracted by each population
interaction $i\in I_{\mathcal{P}}$ from its resource space} $R_{i}$,\emph{
and exchanged from agents to patients through propagators}, see \cite{Gio24_IS}
and \cite{Cru12,BaCr15} for a similar point of view. An intuitive
way to motivate this idea is to say that the more the population is
sharing its goods/resources/information the more it is adapting. See
also Sec.~\ref{sec:GEP-and-others} below for the relationships between
GEP and different approaches in measuring information flows.

From what space these goods $\gamma=(\gamma_{i})_{i\in I_{\mathcal{P}}}$
are taken from? If $x\in\bar{M}_{\mathcal{P},t}^{\text{s}}$ is a
state map of our IS, then $\gamma_{x,i}:=x(-)_{\pr{i},i,2}:[\tst,t]\ra R_{i}$
are the corresponding goods of $i\in I$ as a function of time (up
to $t$). It suffices to consider the family $\left(\gamma_{x,i}\right)_{i\in I_{\mathcal{P}}}$;
e.g.~if $I_{\mathcal{P}}=\{i_{1},\ldots,i_{d}\}$ is finite, then
$\left(\gamma_{x,i}\right)_{i\in I_{\mathcal{P}}}=\left(\gamma_{x,i_{1}},\ldots,\gamma_{x,i_{d}}\right)$
considers all the goods extracted by each interaction in the given
order. The spaces we need to consider are therefore
\begin{align}
R_{\mathcal{P},t} & :=\left\{ \left(\gamma_{x,i}\right)_{i\in I_{\mathcal{P}}}\mid x\in\bar{M}_{\mathcal{P},t}^{\text{s}}\right\} \nonumber \\
R_{\mathcal{P}} & :=\bigcup_{t\in[\tst,\tend]}R_{\mathcal{P},t}.\label{eq:spacesGoods}
\end{align}

We therefore need to finally identify a probability $Q_{\gamma}$
defined on $I_{\mathcal{P}}$ to evaluate the corresponding forces
of diversification $D_{I_{\mathcal{P}}}(\gamma)$ and $D_{I_{\mathcal{P}}}(\omega,t)$
as
\begin{align}
Q_{\gamma} & \text{ is a probability on }I_{\mathcal{P}}\qquad\forall\gamma\in R_{\mathcal{P}}\label{eq:diversificationForces}\\
D_{I_{\mathcal{P}}}(\gamma) & :=\text{Entropy}\left(Q_{\gamma}\right)\\
D_{I_{\mathcal{P}}}(\omega,t) & :=D_{I_{\mathcal{P}}}\left(\left(\Gamma_{i}(\omega)|_{[\tst,t]}\right)_{i\in I_{\mathcal{P}}}\right)\qquad\forall\omega\in\Omega^{\text{g}}\,\forall t\in[\tst,\tend].
\end{align}
The probability $Q_{\gamma}$ is called \emph{diversification probability}.
For example, if $I_{\mathcal{P}}=\{i_{1},\ldots,i_{d}\}$ is finite,
then
\begin{equation}
D_{I_{\mathcal{P}}}(\gamma)=-\sum_{i\in I_{\mathcal{P}}}Q_{\gamma}(i)\cdot\log_{2}Q_{\gamma}(i).\label{eq:DiversificationFinite}
\end{equation}
Note that, in general, the diversification forces depend on the whole
history $\stateNeigh_{\pr{i}}x_{t}$ of the state of the neighborhood
of the propagator of $i\in I_{\mathcal{P}}$, because, e.g.~for the
case of $D_{I_{\mathcal{P}}}(\omega,t)$, the state of goods $\Gamma_{i}(t)(\omega)$
satisfy the evolution equation \cite[(EE)]{Gio24_IS}, see also \cite[Rem.~1.(e)]{Gio24_IS}.
Therefore, this neighborhood can include the influence of entities
which are external to $\mathcal{P}$. Similarly, the propagator of
$i$ not necessarily belongs to $\mathcal{P}$, so that the diversification
of resources can also involve external entities.

The second important condition of the GEP states that \emph{in the
sample $\omega\in\Omega^{\text{g}}$} \emph{diversification forces
are greater at time }$t$ \emph{than at time }$s$ if
\begin{equation}
D_{I_{\mathcal{P}}}(\omega,t)\ge D_{I_{\mathcal{P}}}(\omega,s).\label{eq:greaterDiversTimes}
\end{equation}
As above, the condition that the global state of goods $\bar{\gamma}\in R_{\mathcal{P}}$
is the best from the point of view of diversification forces can be
stated asking that
\begin{equation}
D_{I_{\mathcal{P}}}(\bar{\gamma})\ge D_{I_{\mathcal{P}}}(\omega,t)\qquad\forall\omega\in\Omega^{\text{g}}\,\forall t\in[\tst,\tend].\label{eq:greaterDiversState}
\end{equation}
For example, assume that we have two families of simultaneous \emph{independent}
interactions $J_{\mathcal{P}}$, $K_{\mathcal{P}}$ for the probability
$Q(\gamma)$ (e.g.~because there is no cause-effect relation between
$j\in J_{\mathcal{P}}$ and $k\in K_{\mathcal{P}}$). Then the diversifications
$D_{J_{\mathcal{P}}}(\omega,t):=D_{J_{\mathcal{P}}}\left(\left(\Gamma_{i}(\omega)|_{[\tst,t]}\right)_{i\in J_{\mathcal{P}}}\right)$
and $D_{K_{\mathcal{P}}}(\omega,t):=D_{K_{\mathcal{P}}}\left(\left(\Gamma_{i}(\omega)|_{[\tst,t]}\right)_{i\in K_{\mathcal{P}}}\right)$,
thanks to the logarithm in \eqref{eq:diversificationForces}, will
contribute additively $D_{I_{\mathcal{P}}}(\omega,t)=D_{J_{\mathcal{P}}}(\omega,t)+D_{K_{\mathcal{P}}}(\omega,t)$
at increasing the diversification of the population $\mathcal{P}$.

In the GEP, we ask both \eqref{eq:greaterUnifForcesTimes} and \eqref{eq:greaterDiversTimes}
or both \eqref{eq:greaterUnifForcesState} and \eqref{eq:greaterDiversState}
(but where $\bar{\gamma}$ are the goods already included in the global
state $\bar{y}$), see below Def.~\ref{def:CAS}.

\section{\label{sec:GEP-and-others}Generalized evolution principle, Shannon
entropy, out of equilibrium systems and second law of thermodynamics}

We recall that, in information theory, if two different messages are
extracted from the same probability distribution, then they are undistinguishable
from the point of view of information entropy. Actually, it is not
correct to talk of Shannon entropy of single messages, because this
notion can be applied only to probability distributions (and hence,
e.g., to random variable). This is frequently summarized saying that
entropy does not depend on the meaning of messages but only on their
probability distribution. On the contrary, if we have two messages,
and the former at time $t$ (let us say \emph{``to be or not to be}'')
exchanges greater fluxes of goods with the population $\mathcal{P}$
than the latter (e.g.~``\emph{nttb e obt ooo e r}'') at time $s$,
then inequality \eqref{eq:greaterDiversTimes} could be interpreted
saying that the message exchanged at $t$ is more meaningful for $\mathcal{P}$
than the message exchanged at time $s$. Therefore, the GEP has a
different meaning than the simple Shannon entropy, exactly because
it involves interactions and their propagators, and hence the cause-effect
structure of the considered IS. This is ultimately due to the fact
that we are not only talking of Shannon entropy of arbitrary random
variables or time series, but of goods exchanged from agents to patients
in a polyadic cause-effect relationships, see \cite[Sec.~3.4.1]{Gio24_IS}.
In fact, in every IS, the cause-effect link is expressed by activation
and occurrence times, see \cite[(CE)]{Gio24_IS} and \cite[(SA)]{Gio24_IS}.
Moreover, propagators are themselves interacting entities, and hence
goods can represent fluxes of any form between agents and patients,
see the long list of examples in \cite{Gio24_IS}, examples in Sec.~\ref{sec:Complex-adaptive-systems}
above, and Sec.~\ref{sec:Power-law-distribution} below, where goods
are probabilities, or Sec.~\ref{subsec:Maximization-of-diversification}
where goods are exchanged commodities. In other words, in order that
the GEP holds in a \emph{validated} IS we have to satisfy severe constraints,
because we have to respect the idea of fluxes of goods exchanged in
a polyadic relationships between agents and patients.

For example, the GEP can be used to distinguish the importance of
different messages/states (thought of as interacting entities, and
not as random variables) for a given population $\mathcal{P}$. After
all, the intuitive difference between a meaningful/readable message
and a completely random/unreadable one, is exactly that the former
is able to send us signals (propagators) that we are able to interpret
(goods), whereas random variables can be identified with their probability
distribution, which are more unintelligible for our brain.

In our opinion, this gives elements to solve the critiques expressed
by \cite{JaBaCr} in using transfer entropies in a mechanistic interpretation
as information flow.\smallskip{}

Similarly, it is also important to understand the differences between
GEP and the decreasing of entropy for out of equilibrium systems.
Indeed, it is well-known, see e.g.~\cite{Ni-Pr77}, that this type
of systems constantly make efforts to expel waste having a high level
of entropy, and this allows them to keep a \emph{lower} level of \emph{global}
entropy. Therefore, this seems to contradict the GEP, where the diversification
forces, related to the entropy of exchanged goods, have to be increased.
However, these wastes have a great level of entropy but, exactly because
they are expelled, do not have stable interactions relating their
agents with patients remaining in the system. This decreasing of entropy
is therefore completely different from the increasing of diversification:
in the latter, the system adapts if it strives \emph{to keep} these
exchanged goods and the information/diversification they represent.
\smallskip{}

Finally, the GEP is also very different from the second principle
of thermodynamics, because \emph{we are not considering the whole
entropy} of an isolated system but, on the contrary, only those related
to the exchanged goods of the interactions $i\in I_{\mathcal{P}}$
that aim at decreasing the average cost function $C$. This costs
essentially implies that the system is necessarily not isolated. Similar
considerations, essentially based on the remark that in the GEP we
do not calculate the entropy of the whole system, can be repeated
for nonadditive entropic functionals, see e.g.~\cite{Tsa23} and
references therein.

\section{Mathematical definition of the generalized evolution principle}

The previous motivations justify the following
\begin{defn}[Generalized evolution principle, CAS]
\label{def:CAS}Let $\mathfrak{I}$ be an IS, $s$, $t\in[\tst,\tend]$
and $\mathcal{P}\subseteq E$. Then, we say that \emph{in the sample
$\omega\in\Omega^{\text{g}}$ at time $t$ the population $\mathcal{P}$
is better adapted than at time $s$ (with respect to $I_{\mathcal{P}}$,
$\left(P_{y}^{j}\right)_{j,y}$, $\left(C_{y}^{j}\right)_{j,y}$,
$\left(Q_{\gamma}\right)_{\gamma}$, in the global space $\bar{M}_{\mathcal{P},I_{\mathcal{P}}}$;
briefly: $\mathcal{P}$ is a CAS) if}
\begin{enumerate}
\item $I_{\mathcal{P}}$ is a family of interactions of $\mathcal{P}$ called
\emph{adaptive interactions}.
\item $P_{y}^{j}$ is a probability on the population $\mathcal{P}$ for
each $j=1,\ldots,k$ and each $y\in\bar{M}_{\mathcal{P},I_{\mathcal{P}}}$.
The probability $P_{y}^{j}$ is called \emph{probability to average
the $j-$th cost at the state $y$}.
\item $C_{y}^{j}:\mathcal{P}\ra\R_{\ge0}$ is a measurable function with
respect to $P_{y}^{j}$ for each $y\in\bar{M}_{\mathcal{P},I_{\mathcal{P}}}$
and $j=1,\ldots,k$. $C_{y}=(C_{y}^{1},\ldots,C_{y}^{k})$ is called
\emph{unification cost function at $y$}.
\item $Q_{\gamma}$ is a probability on $I_{\mathcal{P}}$ for each $\gamma\in R_{\mathcal{P}}$
(see \eqref{eq:spacesGoods}) called \emph{diversification probability}.
\item We have
\begin{align}
U_{\mathcal{P}}^{j}(\omega,t) & \ge U_{\mathcal{P}}^{j}(\omega,s)\qquad\forall j=1,\ldots,k,\label{eq:unificationSample}\\
D_{I_{\mathcal{P}}}(\omega,t) & \ge D_{I_{\mathcal{P}}}(\omega,s),\label{eq:diversificationSample}
\end{align}
where \emph{unification forces $U_{\mathcal{P}}^{j}(\omega,t)$} are
defined by \eqref{eq:unificationForcesSample}, and\emph{ diversification
forces} $D_{I_{\mathcal{P}}}$ are defined by \eqref{eq:diversificationForces}.
\end{enumerate}
We also say that the state $\bar{y}=(x,\bar{t},N)\in\bar{M}_{\mathcal{P},I_{\mathcal{P}}}$
(see \eqref{eq:globalStatesAll}) is an \emph{emergent pattern for
$\mathcal{P}$} if
\begin{align}
U_{\mathcal{P}}^{j}(\bar{y}) & \ge U_{\mathcal{P}}^{j}(\omega,s)\qquad\forall j=1,\ldots,k,\label{eq:unificationState}\\
D_{I_{\mathcal{P}}}\left(\left(\gamma_{x,i}\right)_{i\in I_{\mathcal{P}}}\right) & \ge D_{I_{\mathcal{P}}}(\omega,s)\label{eq:diversificationState}
\end{align}
for all $\omega\in\Omega^{\text{g}}$ and all times $s\in[\tst,\tend]$.
We recall that $x\in\bar{M}_{\mathcal{P},t}^{\text{s}}$ and hence
$\left(\gamma_{x,i}\right)_{i\in I_{\mathcal{P}}}$ are the goods
included in the global state $\bar{y}=(x,\bar{t},N)\in\bar{M}_{\mathcal{P},I_{\mathcal{P}}}$.
\end{defn}

The latter conditions \eqref{eq:unificationState} and \eqref{eq:diversificationState}
can hold even if $\bar{y}$ is a global state which is not reached
by the system, i.e.~the equality $\bar{y}=Y_{\mathcal{P}}(\bar{\omega})|_{[\tst,t]}$
never holds for any $\bar{\omega}$ e any $t$. In fact, we may have
a step from a state $Y_{\mathcal{P}}(\omega)|_{[\tst,s]}$ to a better
state $Y_{\mathcal{P}}(\omega)|_{[\tst,t]}$ even if there does not
exist a best possible state, i.e.~an emergent pattern. An intuitive
example that seems to satisfy this property may come from Darwinian
evolution. Think at giraffes and their elongation of neck: the cost
are related to the probability of finding leaves to eat; we have at
least two interacting entities: giraffes and trees, but only one is
adapting with respect to the cost of being hungry; the population
interactions of the force of diversification allowed some giraffes
to have a genetic code that causes a longer neck. It seems that there
is not a maximum length of neck minimizing \emph{this} cost, even
if such a maximum is reached due to the increasing of \emph{other}
costs.

It is clear that a very general case, even if conceptually it cannot
exhaust all the possibilities, is when the cost function $C$ depends
only on the state $X_{\mathcal{P}}(t)$ and not on $(T_{\mathcal{P}}(t),N_{\mathcal{P}}(t))$.
This is an implicit assumption we will consider in the rest of the
paper.

If the cost function $C_{y}\equiv C_{y}(e)$ does not depend on the
entity $e\in\mathcal{P}$ nor on the global state $y$, and $\bar{y}$
is an emergent pattern, the GEP reduces to the classical entropy maximization
principle for the family of probabilities $\left(Q_{y}\right)_{y}$.
On the other hand, if we trivialize all the interactions $i\in I_{\mathcal{P}}$
by choosing constant resources (deterministic extraction of goods),
then the population is adapting at an emergent pattern $\bar{y}$
if it minimizes the expected value of the cost function $C_{\bar{y}}$.
Therefore, the GEP includes both the entropy maximization principle
and classical minimization problems of the cost function $y\in\bar{M}_{\mathcal{P},I_{\mathcal{P}}}\mapsto C_{y}^{j}$
if $C_{y}^{j}$ is deterministic with respect to the probability $P_{y}^{j}$.

Because the expected costs $E_{y}^{j}[C_{y}^{j}]$, $j=1,\ldots,k$,
in \eqref{eq:unificationForcesState} are calculated with a probability
distribution over the population $\mathcal{P}$, we can interpret
the unification forces $U_{\mathcal{P}}^{j}(y)=-E_{y}^{j}(C_{y}^{j})$,
$j=1,\ldots,k$, as proportional to a quantification of suitable \emph{common
goods for the population $\mathcal{P}$}. As a consequence of the
maximization properties of Shannon's entropy, the diversification
forces $D_{I_{\mathcal{P}}}(\gamma)$ can be interpreted as a gauge
of \emph{long-term changes} (with respect to the given cost function
$C$).

Note that an emergent patter $\bar{y}=(x,\bar{t},N)$ results into
probabilities $P_{\bar{y}}^{j}$ on the population $\mathcal{P}$
and $Q_{\bar{\gamma}}$ on adaptive interactions $I_{\mathcal{P}}$,
where $\bar{\gamma}=\left(\gamma_{x,i}\right)_{i\in I_{\mathcal{P}}}$
are the goods included in the global state $\bar{y}$. Starting from
these probability distributions, and depending on the problem, we
can then consider particular realizations $e\in\mathcal{P}$ and $i\in I_{\mathcal{P}}$,
like the mode (i.e.~where the probabilities assume highest value)
in frequently used AI algorithms, see e.g.~\cite{BaChBe}; in this
case, $e$ represents the interacting entity where the weight $P_{\bar{y}}^{j}(e)$
to evaluate the average unification cost $C_{\bar{y}}^{j}$ is highest,
whereas $i$ represent the adaptive interaction which contributes
more to the diversification $D_{I_{\mathcal{P}}}(\bar{\gamma})$.
We can clearly have more that one of these $e$ and $i$ in case of
multimodal probabilities.

Is it a classical complicated system, such as a spring-driven clock,
a CAS with respect to Def.~\ref{def:CAS}? We can think this clock
as a continuous dynamical system (maybe with suitable generalized
functions to represent the discrete ticking of the hands of the clock),
so that it is surely an IS, see \cite[Sec.~4.1]{Gio24_IS}. This IS
can be considered having simple cause-effect interactions starting
from the spring and arriving at clock's hands. The natural cost function
is hence the distance between the perfect speed of each hand, e.g.~$1\text{ Hz}$,
and its actual eventually lower speed. The natural main resource is
the potential energy stored in the spring. Therefore, the cost are
eventually increasing and the diversification forces always constant
(in a model with deterministic use of resources) or decreasing (stochastic
use of resources with a decreasing variance), and indeed also intuitively
the system is not adapting. On the other hand, note that the system
given by the spring clock + a winding person is a CAS.

Note that this mathematical formalization of the GEP has been possible
only thanks to the language we introduced for IS in \cite{Gio24_IS}:
interactions as cause-effect relation between agents and patients
and carried by propagators, state map and state space, space of resources
and goods; all these concepts are used in the previous definition.

We close this section by summarizing the steps we need to realize
in order to model a CAS satisfying the GEP.

\subsection{\label{subsec:General-steps-to-GEP}General steps to model a complex
adaptive system}

For the sake of clarity, we list here the general procedure we have
to follow in order to model a system which obeys the GEP. The process
presented here is clearly not linear and presents several cause-effect
polyadic feedback interactions, i.e.~it can be thought of as a kind
of meta-IS.
\begin{enumerate}[label=\arabic*.]
\item We clearly have to start by defining an IS, or another type of model
embedded as an IS. Therefore, in general we need (see \cite[Tables~1,~2,~3]{Gio24_IS}):
interacting entities, interactions, activations, proper state, goods
and resources, occurrence times, neighbourhoods and transition functions,
even if in several particular models some of these notions are trivial.
\item Understand what are good global states $\bar{M}_{\mathcal{P},t}^{\text{s}}$,
$M_{J,t}^{\text{t}}$, $M_{J,t}^{\text{n}}$, see Sec.~\ref{sec:Power-law-distribution}
and Sec.~\ref{subsec:Generalizing-Von-Thunen} for examples and the
next point in this list.
\item We can try to define the global probability space $\Omega^{\text{g}}$
as explained in Sec.~\ref{sec:Global-state} above. However, an equivalent
approach is to understand only the probability distributions of the
processes $X$, $T$, $N$ of \eqref{eq:randVarIS} and set $\Omega^{\text{g}}:=\bar{M}_{\mathcal{P},I_{\mathcal{P}}}$
with probability $P^{\text{g}}$ given by the joint probabilities
of $X$, $T$, $N$. These processes $X$, $T$, $N$ are then represented
by the projections of each $\bar{M}_{\mathcal{P},I_{\mathcal{P}},t}$
resp.~on $\bar{M}_{\mathcal{P},t}^{\text{s}}$, $\bar{M}_{I_{\mathcal{P}},t}^{\text{t}}$
and $\bar{M}_{I_{\mathcal{P}},t}^{\text{n}}$.
\item Understand what are the unification costs $\left(C_{y}^{j}\right)_{j,y}$
causing unification forces. These are what the system tries to avoid
as far as possible using adapting interactions.
\item Define probabilities to average unification costs $\left(P_{y}^{j}\right)_{j,y}$,
e.g.~giving a greater weight at global states and entities intuitively
paying greater costs.
\item Define diversification probabilities $\left(Q_{\gamma}\right)_{\gamma}$;
these are related to the goods exchanged during the dynamics in order
to distribute resources and decrease the expected value of costs.
Therefore, $Q_{\gamma}(i)$ will be higher for important adapting
interactions $i\in I_{\mathcal{P}}$.
\item Understand what are the\emph{ }adaptive interactions $I_{\mathcal{P}}$,
i.e.~the interactions of the system that dynamically improve the
state of the system by following the GEP.
\item Assume to have an emergent pattern $\bar{y}\in\bar{M}_{\mathcal{P},I_{\mathcal{P}}}$
and derive necessary conditions, or consider simulations where unification
$U_{\mathcal{P}}^{j}(\omega,t)$ and diversification $D_{I_{\mathcal{P}}}(\omega,t)$
increases, see Sec.~\ref{sec:Power-law-distribution} and Sec.~\ref{subsec:Generalizing-Von-Thunen}
for examples.
\item Validate the model by comparing, in the strongest possible way, emergent
patterns predicted from your model with configurations of systems
that are clearly independent from the model itself, e.g.~real-world
systems.
\end{enumerate}
\LyXZeroWidthSpace{}

We now prove, in a more abstract setting and under mild conditions,
that in every CAS the diversification probabilities $Q_{\gamma}$
satisfy a power law when the population is at an emergent pattern,
i.e.~when unification and diversification are maximum. We then give
some examples of cost functions in Sec.~\ref{subsec:ExamplesCostsFunct}.

\section{\label{sec:Power-law-distribution}Power law distribution following
Mandelbrot's idea}

It is well known that power law distributions are frequently associated
to CAS, and hence appear often both in nature and in social systems,
see e.g.~\cite{Sor06,Gab99} and references therein.

Using the language of IS theory, this section follows and generalizes
the classical ideas of B.~Mandelbrot presented in \cite{Man53}.

We imagine to have an IS and a population $\mathcal{P}$ whose state
is described by vectors $x\in S_{\mathcal{P}}\subseteq\R^{n}$. The
systems has to be thought of as a CAS that changes its state so as
to decrease a suitable cost function $E_{\mathcal{P}}:S_{\mathcal{P}}\ra\R_{>0}$
and, at the same time, to increase a corresponding information entropy:
\begin{equation}
D_{I_{\mathcal{P}}}(x)=-\sum_{j=1}^{d}q_{j}(x_{j})\cdot\log_{2}q_{j}(x_{j})>0\qquad\forall x\in S_{\mathcal{P}}.\label{eq:diversMand}
\end{equation}
For example, as in \cite{Man53}, we can think at $n$ words of an
evolving language whose state is described by frequencies $q_{j}(x_{j})=x_{j}$
for each word $j=1,\ldots,n$. Using the language of IS theory, we
can think to have interacting entities $E=\{1,\ldots,n\}=\mathcal{P}$
and $d\le n$ interactions of the type $i_{j}:j\xra{j,\text{evo}}j$
for each $j\in E$ representing the evolution of the $j-$th interacting
entities from a previous (unknown) state to the new one $x_{j}$.
Therefore, $x_{j}=\gamma_{i_{j}}(t)$ are the goods of $i_{j}$, and
all the other elements of the IS are trivial, e.g.~occurrence times
and transition functions $f_{j}$ are trivial. The adapting interactions
are hence $I_{\mathcal{P}}=\{i_{j_{1}},\ldots,i_{j_{d}}\}$, and the
diversification probabilities are $Q_{x}(i_{j_{k}})=q_{j_{k}}(x_{j_{k}})$
for all $k=1,\ldots,d$ (compare \eqref{eq:diversMand} and \eqref{eq:DiversificationFinite}).
We are only interested to the global state space $\bar{M}_{\mathcal{P},I_{\mathcal{P}}}=S_{\mathcal{P}}\subseteq\R^{n}$
because all the other ones are trivial (no activation, times, neighbourhoods,
etc.).

In general, i.e.~not just thinking at the example of an evolving
language, we assume that the probabilities always enter into the global
state of the goods of the population $\mathcal{P}$:
\[
q_{j}(x_{j})=x_{j}\qquad\forall j=1,\ldots,d,\ \forall x\in S_{\mathcal{P}}.
\]
The system adapts (e.g.~it evolves) so as to minimize the ratio $\frac{E_{\mathcal{P}}}{D_{I_{\mathcal{P}}}}$
at the interior point $y\in S_{\mathcal{P}}$:
\[
\forall x\in S_{\mathcal{P}}:\ 0<\frac{E_{\mathcal{P}}(y)}{D_{I_{\mathcal{P}}}(y)}\le\frac{E_{\mathcal{P}}(x)}{D_{I_{\mathcal{P}}}(x)}.
\]
Note that this condition is surely satisfied if the system follows
the GEP and $E_{\mathcal{P}}(x)$ is the average value of a cost function
$C_{x}$ defined on the population $\mathcal{P}$, see below. 

\noindent The cost function $E_{\mathcal{P}}$ must satisfy the inequality
\begin{equation}
\frac{\partial E_{\mathcal{P}}}{\partial x_{k}}(y)\le\alpha_{k}(y)\cdot\log_{2}k\qquad\forall k=2,\ldots,d\label{eq:ineqParDerCost}
\end{equation}
for some $\alpha_{k}:S_{\mathcal{P}}\ra\R$. Note that \eqref{eq:ineqParDerCost}
is not required to hold for $k=1$. Finally, we assume that
\begin{equation}
\sum_{k=1}^{d}k^{-\alpha_{k}(y)\cdot\frac{D_{I_{\mathcal{P}}}(y)}{E_{\mathcal{P}}(y)}}=:N(y)\ge\frac{1}{q_{1}(y)}\ge e.\label{eq:normalHp}
\end{equation}
Note that this implies $q_{1}(y)\le e^{-1}\simeq0.368$. This is another
restriction on the value $\alpha_{k}(y)\in\R$: whereas condition
\eqref{eq:ineqParDerCost} states that we can take $\alpha_{k}(y)$
as large as we want, the inequality \eqref{eq:normalHp} yields that
the larger is $\alpha_{k}(y)$, the more difficult will be to arrive
at a value $N(y)\ge e$ because of the dependence of the normalization
factor $N(y)$ from $\alpha_{k}(y)$.

We have the following
\begin{thm}
\label{thm:powerLaw}Let $S_{\mathcal{P}}\subseteq\R^{n}$ be an open
set and $y\in S_{\mathcal{P}}$. Let $q_{j}\in\mathcal{C}^{1}(S_{\mathcal{P}},\R_{\ge0})$,
for all $j=1,\ldots,d\le n$, be such that
\[
\forall x\in S_{\mathcal{P}}:\ \left(q_{j}(x)\right)_{j=1,\ldots,d}\text{ is a probability.}
\]
Let
\[
D_{I_{\mathcal{P}}}(x)=-\sum_{j=1}^{d}q_{j}(x_{j})\cdot\log_{2}q_{j}(x_{j})\qquad\forall x\in S_{\mathcal{P}}
\]
be the diversification force $D_{I_{\mathcal{P}}}$. Let $E_{\mathcal{P}}\in\mathcal{C}^{1}(S_{\mathcal{P}},\R_{>0})$
be such that
\begin{equation}
\forall x\in S_{\mathcal{P}}:\ 0<\frac{E_{\mathcal{P}}(y)}{D_{I_{\mathcal{P}}}(y)}\le\frac{E_{\mathcal{P}}(x)}{D_{I_{\mathcal{P}}}(x)}.\label{eq:min}
\end{equation}
For example, this holds if $y\in S_{\mathcal{P}}$ is an emergent
pattern for $\mathcal{P}$. Finally, assume that
\begin{align}
 & q_{j}(x_{j})=x_{j}\qquad\forall j=1,\ldots,d\ \forall x\in S_{\mathcal{P}}\nonumber \\
 & \frac{\partial E_{\mathcal{P}}}{\partial x_{k}}(y)\le\alpha_{k}(y)\cdot\log_{2}k\qquad\forall k=2,\ldots,d\label{eq:hpParC}\\
 & \sum_{k=1}^{d}k^{-\alpha_{k}(y)\cdot\frac{D_{I_{\mathcal{P}}}(y)}{E_{\mathcal{P}}(y)}}=:N(y)\ge\frac{1}{q_{1}(y)}\ge e,\nonumber 
\end{align}
where $\alpha_{k}:S_{\mathcal{P}}\ra\R$. Then, we have
\begin{enumerate}
\item $q_{k}(y)=q_{1}(y)\cdot k^{-\alpha_{k}(y)\cdot\frac{D_{I_{\mathcal{P}}}(y)}{E_{\mathcal{P}}(y)}}$
for all $k=1,\ldots,d$.
\item $q_{1}(y)=\frac{1}{N(y)}$.
\end{enumerate}
\end{thm}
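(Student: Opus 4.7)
The plan is to treat this as a constrained optimization. The assumption that $(q_j(x))_{j=1,\ldots,d}$ is a probability for every $x\in S_{\mathcal{P}}$, together with $q_j(x_j)=x_j$, implicitly imposes the constraint $\sum_{j=1}^{d}x_j=1$ on $S_{\mathcal{P}}$, and hypothesis \eqref{eq:min} says that $y$ minimizes the smooth positive ratio $E_{\mathcal{P}}/D_{I_{\mathcal{P}}}$ on this constraint surface. I would therefore apply the method of Lagrange multipliers: there exists $\mu\in\R$ such that $\partial(E_{\mathcal{P}}/D_{I_{\mathcal{P}}})/\partial x_k\,(y)=\mu$ for every $k=1,\ldots,d$.

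The first routine computation is $\partial D_{I_{\mathcal{P}}}/\partial x_k\,(y)=-\log_2 y_k-\log_2 e$, directly from the explicit entropy formula and $q_j(x_j)=x_j$. Expanding the quotient rule in the Lagrange condition, multiplying through by $D_{I_{\mathcal{P}}}(y)$, and substituting, a short rearrangement yields the key common-form relation
\[
y_k=C\cdot 2^{-(D_{I_{\mathcal{P}}}(y)/E_{\mathcal{P}}(y))\,\partial E_{\mathcal{P}}/\partial x_k\,(y)}\qquad\forall k=1,\ldots,d,
\]
with the same constant $C:=e^{-1}\cdot 2^{\mu D_{I_{\mathcal{P}}}(y)^{2}/E_{\mathcal{P}}(y)}$ for every coordinate.

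Plugging in hypothesis \eqref{eq:hpParC}, for $k\ge 2$ the bound $\partial E_{\mathcal{P}}/\partial x_k\,(y)\le\alpha_k(y)\log_2 k$ gives the pointwise estimate $y_k\ge C\cdot k^{-\alpha_k(y)D_{I_{\mathcal{P}}}(y)/E_{\mathcal{P}}(y)}$, while the $k=1$ instance of the Lagrange formula identifies $y_1$ with $C$ (matching the trivial identity $1^{-\alpha_1 D/E}=1$). Summing these bounds over $k$ and invoking $\sum_k y_k=1$ together with the definition of $N(y)$ yields $1\ge C\cdot N(y)$, i.e.\ $C\le 1/N(y)$. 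On the other hand, the last line of \eqref{eq:hpParC} says $N(y)\ge 1/q_1(y)=1/y_1$, which forces $y_1\ge 1/N(y)$; combining with $y_1=C\le 1/N(y)$ gives $y_1=C=1/N(y)$, which is claim (ii). Because the sum-normalization is now tight, every intermediate inequality must be an equality, yielding $q_k(y)=y_1\cdot k^{-\alpha_k(y)D_{I_{\mathcal{P}}}(y)/E_{\mathcal{P}}(y)}$ for all $k=1,\ldots,d$, which is claim (i).

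The main obstacle I anticipate is the bookkeeping around the first coordinate and the normalization constant: the hypothesis on $\partial E_{\mathcal{P}}/\partial x_k$ is stated only for $k\ge 2$, so I would need to justify carefully why the $k=1$ instance of the Lagrange formula matches the trivial $k=1$ version of the power law (the natural sufficient condition being $\partial E_{\mathcal{P}}/\partial x_1\,(y)\ge 0$, which should follow from the structure of the minimization), and why the condition $N(y)\ge 1/q_1(y)$ is exactly what is needed to sandwich $y_1$ between its two derived bounds. A secondary delicate point will be making the Lagrange step rigorous when $S_{\mathcal{P}}$, though open in $\R^n$, is effectively constrained through its first $d$ coordinates by the probability condition, with the remaining $n-d$ coordinates free and contributing vanishing partials of $E_{\mathcal{P}}/D_{I_{\mathcal{P}}}$ at $y$.
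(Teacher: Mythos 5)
Your route differs from the paper's, and it contains a genuine gap at the crucial step. The paper does \emph{not} use Lagrange multipliers: since $S_{\mathcal{P}}$ is open and \eqref{eq:min} makes $y$ an interior minimum of $E_{\mathcal{P}}/D_{I_{\mathcal{P}}}$, it uses the unconstrained first-order condition $\partial_k\bigl(E_{\mathcal{P}}/D_{I_{\mathcal{P}}}\bigr)(y)=0$ for every $k$, with $\partial_k y_j=\delta_{kj}$. This pins the relation down to $D_{I_{\mathcal{P}}}\,\partial_k E_{\mathcal{P}}+E_{\mathcal{P}}(\log_2 y_k+\log_2 e)=0$, i.e.\ $y_k=e^{-1}2^{-(D_{I_{\mathcal{P}}}/E_{\mathcal{P}})\partial_k E_{\mathcal{P}}(y)}$, so the normalization constant is exactly $e^{-1}$ --- there is no undetermined factor. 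Then \eqref{eq:hpParC} gives $y_k\ge e^{-1}k^{-\alpha_k D_{I_{\mathcal{P}}}/E_{\mathcal{P}}}$ for $k\ge2$, the hypothesis $N(y)\ge e$ upgrades this to $y_k\ge N(y)^{-1}k^{-\alpha_k D_{I_{\mathcal{P}}}/E_{\mathcal{P}}}$, the hypothesis $q_1(y)\ge 1/N(y)$ covers $k=1$, and $\sum_j q_j(y)=1=\sum_k N(y)^{-1}k^{-\alpha_k D_{I_{\mathcal{P}}}/E_{\mathcal{P}}}$ forces equality termwise. Note that your argument never uses $N(y)\ge e$; in the paper that hypothesis is precisely what replaces your unknown constant $C$.

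The gap in your version is the identification $y_1=C$. Once you introduce a multiplier $\mu$, the $k=1$ equation gives $y_1=C\,2^{-(D_{I_{\mathcal{P}}}/E_{\mathcal{P}})\partial_1 E_{\mathcal{P}}(y)}$ with $C=e^{-1}2^{\mu D_{I_{\mathcal{P}}}^2/E_{\mathcal{P}}}$ unknown, and nothing in the hypotheses controls $\partial_1 E_{\mathcal{P}}(y)$ (the bound \eqref{eq:hpParC} deliberately excludes $k=1$). Your proposed fix $\partial_1 E_{\mathcal{P}}(y)\ge0$ is asserted to ``follow from the structure of the minimization'' but is not derivable in the constrained setting: the Lagrange condition only ties $\partial_1 E_{\mathcal{P}}(y)$ to the unknown $\mu$, so $y_1$ cannot be compared with $C$, the sandwich $y_1\le C\le 1/N(y)\le y_1$ collapses, and neither conclusion follows. (In the paper's unconstrained reading the analogous comparison is immediate: $C=e^{-1}$ and $q_1(y)\le e^{-1}$ is part of the hypotheses, and in fact $\partial_1 E_{\mathcal{P}}(y)\ge0$ then drops out of the $k=1$ critical-point equation rather than feeding into it.) If you want to keep the constrained formulation, you must either add a hypothesis fixing the sign of $\partial_1 E_{\mathcal{P}}(y)$ or, better, follow the paper and read \eqref{eq:min} as an interior minimum on the open set $S_{\mathcal{P}}$, which removes the multiplier altogether; with the multiplier left undetermined the statement as given is not proved.
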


\begin{proof}
Since $S_{\mathcal{P}}$ is an open set, $y\in S_{\mathcal{P}}$ and
$E_{\mathcal{P}}$, $D_{I_{\mathcal{P}}}\in\mathcal{C}^{1}(S_{\mathcal{P}},\R_{>0})$,
from \eqref{eq:min} we get $\partial_{k}\left(\frac{E_{\mathcal{P}}}{D_{I_{\mathcal{P}}}}\right)(x)=0$,
where $\partial_{k}:=\frac{\partial}{\partial x_{k}}$. For simplicity,
all the functions that will appear in the following are evaluated
at the point $y$. We have
\[
\partial_{k}E_{\mathcal{P}}\cdot D_{I_{\mathcal{P}}}+E_{\mathcal{P}}\cdot\sum_{j}\left(\partial_{k}y_{j}\cdot\log_{2}y_{j}+y_{j}\frac{1}{y_{j}}\log_{2}e\cdot\partial_{k}y_{j}\right)=0,
\]
Where we used $q_{j}(y)=y_{j}$ and hence $\partial_{k}y_{j}=\frac{\partial}{\partial x_{k}}(x_{j})|_{x_{j}=y_{j}}=\delta_{kj}$,
so
\[
D_{I_{\mathcal{P}}}\cdot\partial_{k}E_{\mathcal{P}}+E_{\mathcal{P}}\left(\log_{2}y_{k}+\log_{2}e\right)=0.
\]
By \eqref{eq:hpParC}, for $k\ge2$ we obtain
\[
\log_{2}y_{k}=-\frac{D_{I_{\mathcal{P}}}}{E_{\mathcal{P}}}\partial_{k}E_{\mathcal{P}}-\log_{2}e\ge-\frac{D_{I_{\mathcal{P}}}}{E_{\mathcal{P}}}\alpha_{k}\log_{2}k-\log_{2}e,
\]
and hence
\begin{equation}
y_{k}=q_{k}\ge2^{\left(-\frac{D_{I_{\mathcal{P}}}}{E_{\mathcal{P}}}\alpha_{k}\log_{2}k-\log_{2}e\right)}=k^{-\alpha_{k}\frac{D_{I_{\mathcal{P}}}}{E_{\mathcal{P}}}}e^{-1}.\label{eq:base2}
\end{equation}
We assumed that $N\ge e$, so
\[
q_{k}\ge\frac{1}{N}\cdot k^{-\alpha_{k}\frac{D_{I_{\mathcal{P}}}}{E_{\mathcal{P}}}}\qquad\forall k=1,\ldots,d.
\]
Note that this inequality holds also for $k=1$ because we assumed
that $q_{1}(y)\ge\frac{1}{N(y)}$.

Finally, we note that we cannot have $q_{h}>\frac{1}{N}\cdot h^{-\alpha_{k}\frac{D_{I_{\mathcal{P}}}}{E_{\mathcal{P}}}}$
for some $h=1,\ldots,d$ because otherwise we would have
\[
\sum_{j=1}^{d}q_{j}=1>\sum_{j=1}^{d}\frac{1}{N}k^{-\alpha_{k}\frac{D_{I_{\mathcal{P}}}}{E_{\mathcal{P}}}}=1.
\]
Therefore, we must have $q_{k}=\frac{1}{N}k^{-\alpha_{k}\frac{D_{I_{\mathcal{P}}}}{E_{\mathcal{P}}}}$
for all $k=1,\ldots,d$. Finally, for $k=1$ we get $q_{1}=\frac{1}{N}$
which proves both our conclusions.
\end{proof}
\noindent Examples of cost functions satisfying \eqref{eq:ineqParDerCost}
are given by the average value $E_{\mathcal{P}}(x)=\sum_{k=1}^{n}c_{k}(x)\cdot p_{k}(x)$,
where $\left(p_{1}(x),\ldots,p_{n}(x)\right)$ is a probability. In
the notations of the GEP, we have $U_{\mathcal{P}}(x)=-E_{\mathcal{P}}(x)$,
the unification cost is $C_{x}(k)=c_{k}(x)$, and the probabilities
to average the cost are $P_{x}(k)=p_{k}(x)$ for each interacting
entity $k\in\mathcal{P}=\{1,\ldots,n\}$.

The $k$-th component of the cost $c_{k}$ can be any one of the following
examples.

\subsection{\label{subsec:ExamplesCostsFunct}Examples of cost functions}
\begin{enumerate}
\item \label{enu:exInver}In this first case, $p_{k}(x)=q_{k}(x)=x_{k}$
and $c_{k}(x)=\frac{a}{q_{k}(x)^{s}}\log_{2}k=\frac{a}{x_{k}^{s}}\log_{2}k$,
where $a$, $s\in\R_{>0}$. This example represents costs that are
decreasing with an increasing of the probabilities $q_{k}(x)=x_{k}$
and they are increasing with an increasing of the number of bits $\log_{2}k$
which are necessary to transmit the rank $k$. In this case we have
$\partial_{k}E_{\mathcal{P}}(y)=\partial_{k}c_{k}(y)\cdot y_{k}+c_{k}(y)=\frac{a}{y_{k}^{s}}\left(1-s\right)\log_{2}k$.
Therefore, it suffices to take $\alpha_{k}(y)=\frac{a}{y_{k}^{s}}\left(1-s\right)$.
\item For $k\ge2$, assume that $c_{k}(x)\le a\cdot\log_{b}(k+k_{0})+j_{0}$,
where $a$, $j_{0}\in\R_{>0}$, $b>2$, $k_{0}\le b-2$ and also that
$\partial_{k}E_{\mathcal{P}}(x)\le c_{k}(x)$. This example is considered,
with the equality signs, in \cite{Man53}. Usually, $b$ is the number
of letters in an alphabet. The second inequality e.g.~holds if the
costs $c_{k}$ do not depend on the probabilities $y_{k}$, so that
$\partial_{k}E_{\mathcal{P}}(x)=c_{k}(x)$. We have $\log_{b}(k+k_{0})\le\log_{2}k$
if and only if
\begin{equation}
k+k_{0}\le b^{\log_{2}k}=\left(b^{\log_{b}k}\right)^{\frac{1}{\log_{b}2}}=k^{\log_{2}b}.\label{eq:ineq_k_0}
\end{equation}
Since $b>2$, we have $\log_{2}b>1$ and the function $k^{\log_{2}b}-k$
is increasing in $k$. Therefore if $k_{0}\le2^{\log_{2}b}-2=b-2$
the inequality \eqref{eq:ineq_k_0} always holds, and hence $c_{k}(x)\le a\log_{2}k+j_{0}$.
If we take $\alpha_{k}(x)\equiv a+j_{0}$, we get $\alpha_{k}\log_{2}k=a\log_{2}k+j_{0}\log_{2}k\ge a\log_{2}k+j_{0}\ge a\cdot\log_{b}(k+k_{0})+j_{0}\ge c_{ik}(x)\ge\partial_{k}E_{\mathcal{P}}(x)$
for $k\ge2$.
\item We can choose a better estimate of $\alpha_{k}$ if, for $k\ge2$,
$c_{k}(y)\le j_{0}+a\cdot\log_{b}k$ and $\partial_{k}E_{\mathcal{P}}(x)\le c_{k}(x)$,
i.e.~if $k_{0}=0$ in the previous example. This case is considered
in \cite{CoMi}. In fact, for $k\ge2$ we have
\begin{align*}
\partial_{k}E_{\mathcal{P}}(y) & \le c_{k}(y)\le j_{0}+a\log_{b}k=j_{0}\log_{2}2+a\log_{b}2\cdot\log_{2}k\le\\
 & \le j_{0}\log_{2}k+a\log_{b}2\cdot\log_{2}k=(a\log_{b}2+j_{0})\cdot\log_{2}k.
\end{align*}
We can hence set $\alpha_{k}\equiv a\log_{b}2+j_{0}$.
\item $c_{k}(y)=\gamma_{k}>0$. This is the case of constant costs depending
only on the rank $k$ (e.g.~we can have that $\gamma_{k}$ is proportional
to the length of the words of rank $k$). We therefore have to take
$\alpha$ so that $\alpha_{k}(y)\ge\frac{\gamma_{k}}{\log_{2}k}$.
\end{enumerate}
\noindent Note that in the first example \ref{enu:exInver} it is
not reasonable to assume that this formula holds also for $k=1$ because
this would yield a null cost $c_{1}(y)=0$. This, and the calculations
we realized in the second example, motivate that \eqref{eq:ineqParDerCost}
holds only for $k\ge2$.

\section{\label{subsec:Generalizing-Von-Thunen}Generalizing Von Thünen's
model}

Von Thünen's model, see \cite{VonTh}, tries to answer the basic questions
of location and land use theory: ``where should a certain activity
be located?'' and ``which activity should be chosen at a certain
location?''. Both questions address the principles underlying the
spatial layout of an economy. Several original assumptions of Von
Thünen's model can be weakened and generalized by simply using an
appropriate mathematical notations and modeling; later, we will see
them in details. Most important for us is that, in this model, costs
and forces of diversification have simple properties that can be generalized
to other CAS. This actually show another feature of having a common
mathematical language for CS: describing a particular model with the
language of IS theory, one can recognize that the obtained results
can be actually generalized and hence potentially applied to several
other CS.

In the present section, we explicitly use units of measurement, because
this greatly helps the understanding of the different economic quantities
that we are going to introduce. We will use notations such as $\left[\frac{\text{€}}{\text{m}^{2}}\right]$
to denote the 1-dimensional (totally ordered real vector) space of
quantities whose unit of measurement is € per square meter. Mathematically,
it can be identified with the space of polynomials in the unknowns
$\text{€\ensuremath{\cdot\text{m}}}^{-2}$.

\subsection{\label{subsec:VT-impedanceZones}Von Thünen's impedance zones}

We need to introduce several quantities and notations:
\begin{enumerate}[label=\arabic*.]
\item $B\in\N_{>0}$ the \emph{number of commodities} produced by the considered
economy. There are no a priory assumptions on the type of commodities
(e.g.~not necessarily of agricultural type).
\item For all $b=1,\ldots,B$, a unit of measurement $u_{b}$ for the commodity
$b$. For example $u_{b}$ can be $\textsf{ton}$, or an integer number
$n\in\N_{>0}$ so that $[u_{b}]=\R$ (dimensionless), or $\textsf{box}$,
etc. 
\item $x_{m}\in\R^{2}$ is the \emph{location of the market} $m$.
\item $A\subseteq\R^{2}$ all possible \emph{locations for the companies}
that produce some commodity $b=1,\ldots,B$.
\item \label{enu:yield}$y_{b}:A\ra\left[\frac{u_{b}}{\text{m}^{2}}\right]_{>0}$,
$y_{b}(x)>0$ is the \emph{yield of the commodity} $b$ if the company
is located at $x\in A$. The model is not stochastic, so quantities
like these can denote average values. The use of a spatial unit of
measurement like $\text{m}^{2}$ is only useful so as to not use too
heavy notations. More appropriate units of production $v_{b}$, depending
on the commodity $b$, could be introduced instead of $\text{m}^{2}$
(e.g.~it could be $v_{b}=\text{kwh}$, or $v_{b}=\frac{\text{hour}}{\text{man}}$). 
\item $p_{b}(x_{m})\in\left[\frac{\text{€}}{u_{b}}\right]_{>0}$ \emph{price
of the commodity} $b$ in the market $x_{m}$ per units of $b$.
\item $c_{b}:A\ra\left[\frac{\text{€}}{u_{b}}\right]_{>0}$ \emph{, $c_{b}(x)>0$
}is the \emph{production cost} of $b$ at the location $x\in A$ per
units of $b$.
\item $j(-,x_{m}):A\ra[i]_{\ge0}$, $j(x,x_{m})$ is the \emph{impedance}
between the location $x\in A\subseteq\R^{2}$ and the market $x_{m}\in\R^{2}$
measuring the pure transportation costs to move from $x$ to $x_{m}$
(see also below); the impedance has unit of measurement $i$ (e.g.~$i$
can be $i=\text{km}$, $i=\text{hour}$, $i=\frac{\text{hour}}{\text{man}}$,
$i=\text{€}$, etc.). We do not need any assumption about existence
or non existence of possible routes, neither on the nature of the
transportation, nor on the linearity of $j$ with respect to the distance
between $x$ and $x_{m}$ along the shortest route, not even if we
are actually moving goods from $x$ to $x_{m}$ or vice versa (this
is only a useful intuitive interpretation in case of application in
land use theory).
\item $F_{b}:[i]_{\ge0}\ra\left[\frac{\text{€}}{u_{b}}\right]_{>0}$,
$F_{b}(d)$ is the \emph{transportation cost} of the commodity $b$
per units of $b$ and for any pair of points, $(x,x_{m})\in\R^{2}$
having impedance $d\in[i]_{>0}$. For example, $F_{b}(d)$ can be
lower if we assume the possibility to use refrigerators for the transportation
of a dairy product $b$. This modeling assumption includes the particular
case where $F_{b}(d)=\bar{F}_{b}\cdot d$, where $\bar{F}_{b}\in\left[\frac{\text{€}}{u_{b}\cdot i}\right]_{>0}$,
i.e.~the case where the transportation cost is proportional to the
impedance $d$. Usually, one assumes that the transportation cost
is increasing with the impedance $d$:
\begin{equation}
\forall d_{1},d_{2}\in[i]_{\ge0}:\ d_{1}\le d_{2}\Rightarrow F_{b}(d_{1})\le F_{b}(d_{2}).\label{eq:F_b-increasing}
\end{equation}
However, we will never use this assumption.
\item $k_{b}(-,x_{m}):A\ra\left[\frac{\text{€}}{\text{m}^{2}}\right]_{>0}$,
$k_{b}(x,x_{m})$ is the average \emph{life cost} (everything but
the rent of the company's location), per $\text{m}^{2}$, of the owner
producing $b$ at $x$ and selling $b$ in the market $x_{m}$.
\end{enumerate}
We can now introduce the basic quantities of Von Thünen's model, using
both a specific language of land use theory and a more general one:
\begin{defn}
\noindent \label{def:value}The \emph{(land) value }or \emph{locational
rent} at $x\in A$ with respect to the production of $b=1,\ldots,B$
and the market $x_{m}$ is
\begin{equation}
L_{b}(x,x_{m}):=y_{b}(x)\cdot\left[p_{b}(x_{m})-c_{b}(x)-F_{b}\left(j(x,x_{m})\right)\right]\in\left[\frac{\text{€}}{\text{m}^{2}}\right].\label{eq:landValue}
\end{equation}
The \emph{ideal value/rent }for a company located \emph{at $x$ with
respect to the market located at $x_{m}$} is
\begin{equation}
R(x,x_{m}):=\max_{b=1,\ldots,B}\left[L_{b}(x,x_{m})-k_{b}(x,x_{m})\right]\in\left[\frac{\text{€}}{\text{m}^{2}}\right].\label{eq:idealRent}
\end{equation}
From this, we deduce that the life cost must satisfy the constraint
\begin{equation}
k_{b}(x,x_{m})<L_{b}(x,x_{m})\quad\forall x\in A.\label{eq:cost-value}
\end{equation}
Moreover, we say that $x$ \emph{is a good (location) for (the production
of) $b$} if
\begin{equation}
R(x,x_{m})=L_{b}(x,x_{m})-k_{b}(x,x_{m}),\label{eq:goodFor_b}
\end{equation}
i.e.~if the ideal rent $R$ equals the land value $L_{b}$ minus
the life cost $k_{b}$. Finally, the \emph{impedance boundaries around
the sink/market $x_{m}$} are given by
\begin{align}
\underline{r}_{b}(x_{m}): & =\inf\left\{ j(x,x_{m})\mid x\in A,\ x\text{ is good for }b\right\} \in[i]_{\ge0}\label{eq:inf}\\
\overline{r}_{b}(x_{m}): & =\sup\left\{ j(x,x_{m})\mid x\in A,\ x\text{ is good for }b\right\} \in[i]_{\ge0}.\label{eq:sup}
\end{align}
\end{defn}

\noindent Therefore, if a company is located at $x$, which is a good
location for the production of $b$, we trivially have that $\underline{r}_{b}(x_{m})\le j(x,x_{m})\le\overline{r}_{b}(x_{m})$,
that is the company is located in the corresponding \emph{impedance
zone} bounded by $\underline{r}_{b}(x_{m})\le\overline{r}_{b}(x_{m})$.
Note that if at least two locations are good for $b$ and have different
impedance, the zone is non trivial, i.e. $\underline{r}_{b}(x_{m})<\overline{r}_{b}(x_{m})$.

In this setting, only definition \eqref{eq:landValue} is really specific
of this model of land use. For an arbitrary IS, we can assume to have
a \emph{value function} $L_{b}:A\times A\ra\R$, a \emph{cost function}
$k_{b}:A\times A\ra\R$ satisfying \eqref{eq:cost-value} and an \emph{impedance
function} $j(-,x_{m}):A\ra[i]_{\ge0}$. With these, we can define
the quantities $R(x,x_{m})$ as in \eqref{eq:idealRent} (the \emph{ideal
value} for all the indexes $b=1,\ldots,B$), the property of being
a \emph{good} $x$ \emph{for $b$}, as in \eqref{eq:goodFor_b} and
the impedance boundaries as in \eqref{eq:inf} and \eqref{eq:sup}.

\subsection{Disjoint impedance zones}

We now want to see under what assumptions the impedance zones are
disjoint, i.e.~when $\underline{r}_{\beta}(x_{m})\le\overline{r}_{\beta}(x_{m})\le\underline{r}_{b}(x_{m})\le\overline{r}_{b}(x_{m})$
if $b\ne\beta$ are two different indexes/commodities. We need to
hypothesize that the cost of life $k_{b}$ does not depend on the
location $x$. We will use e.g.~the notation $k_{b}\equiv k_{b}(-)\in\left[\frac{\text{€}}{\text{m}^{2}}\right]$.
In land use theory, this is clearly an assumption which holds only
if $A\subseteq\R^{2}$ is not too large; e.g.~it surely does not
hold for locations situated in different countries, with different
cost of labor, different climate conditions and different life costs. 

Let $b$, $\beta=1,\ldots,B$, $b\ne\beta$, be two commodities. For
simplicity, we omit the dependence by the market's location $x_{m}$.
By contradiction, assume that
\begin{equation}
\overline{r}_{\beta}>\underline{r}_{b}\label{eq:star}
\end{equation}
Definition \eqref{eq:sup} yields the existence of a location $x\in A$
which is good for $\beta$ and such that $\underline{r}_{b}<j(x)\le\overline{r}_{\beta}$.
Since $x$ is good for $\beta$, we have
\begin{equation}
R(x)=L_{\beta}(x)-k_{\beta}=\max_{b}\left[L_{b}(x)-k_{b}\right]\ge L_{b}(x)-k_{b}.\label{eq:1_14}
\end{equation}
Analogously, from \eqref{eq:inf} and $\underline{r}_{b}<j(x)$ we
get the existence of $y\in A$ which is good for $b$ and such that
\begin{align}
\underline{r}_{b} & \le j(y)<j(x)\label{eq:3_14}\\
R(y) & =L_{b}(y)-k_{b}\ge L_{\beta}(y)-k_{\beta}.\label{eq:2_14}
\end{align}
How can it happen that $x$ is a good location for $\beta$ and $y$
is not a good location for it, even if $j(y)<j(x)$? To understand
this point, we compare the \emph{gain (of net land value) passing
from $x$ to }$y$, for an arbitrary commodity $\beta$ we have:
\begin{equation}
\Delta\eta_{\beta}(x,y):=\left[L_{\beta}(y)-k_{\beta}\right]-\left[L_{\beta}(x)-k_{\beta}\right]=L_{\beta}(y)-L_{\beta}(x)\label{eq:1_15}
\end{equation}
If we assume $\Delta\eta_{\beta}(x,y)\ge\Delta\eta_{b}(x,y)$ and
consider \eqref{eq:1_15}, we get
\begin{align*}
L_{\beta}(y) & \eqUp^{\eqref{eq:1_15}}L_{\beta}(x)+\Delta\eta_{\beta}(x,y)\ge L_{\beta}(x)+\Delta\eta_{b}(x,y)\\
 & \eqUp^{\eqref{eq:1_15}}L_{\beta}(x)+L_{b}(y)-L_{b}(x)\\
 & \phantom{(}\ge L_{\beta}(x)+L_{b}(y)-L_{b}(x)-k_{\beta}+k_{\beta}\\
 & \geUp^{\eqref{eq:1_14}}L_{b}(x)-k_{b}+L_{b}(y)-L_{b}(x)+k_{\beta}\\
 & \phantom{(}=-k_{b}+L_{b}(y)+k_{\beta}
\end{align*}
and hence $L_{\beta}(y)-k_{\beta}\ge L_{b}(y)-k_{b}$ so that $L_{\beta}(y)-k_{\beta}=L_{b}(y)-k_{b}$
by \eqref{eq:2_14}. Therefore, $y$ is a good location both for $\beta$
and $b$. We therefore proved that if $\Delta\eta_{\beta}(x,y)\ge\Delta\eta_{b}(x,y)$
and the impedance zones of $b$ and $\beta$ intersect, then $L_{\beta}(y)-k_{\beta}=L_{b}(y)-k_{b}$.

We can summarize this arguments with the following
\begin{thm}[J.H.~von Thünen]
Let us assume that $k_{b}$ does not depend on the location $x\in A$.
Let $b$, $\beta$ be two commodities such that for all $x$, $y\in A$
\begin{align}
\Delta\eta_{\beta}(x,y) & \ge\Delta\eta_{b}(x,y)\label{eq:Tr}\\
L_{\beta}(y)-k_{\beta} & \ne L_{b}(y)-k_{b}.\label{eq:LV}
\end{align}
Then impedance zones of $b$ and $\beta$ (around the market $x_{m}$)
are disjoint, i.e.~$\underline{r}_{\beta}\le\overline{r}_{\beta}\le\underline{r}_{b}\le\overline{r}_{b}$.
Therefore, if any pair of different commodities always have different
land values \eqref{eq:LV} and different variations of transportation
costs \eqref{eq:Tr} (for all $x$, $y\in A$), then we can order
the commodities so that
\[
\underline{r}_{b_{1}}\le\overline{r}_{b_{1}}\le\underline{r}_{b_{2}}\le\overline{r}_{b_{2}}\le\ldots\le\underline{r}_{b_{B}}\le\overline{r}_{b_{B}}.
\]
The same result holds in any IS where we can define a value function
$L_{b}:A\ra\R$, a cost $k_{b}\in\R$ and an impedance function $j:A\ra[i]_{\ge0}$
for each interacting entities $b\in\mathcal{P}$ in a finite population,
and for the quantities as defined in Def.~\ref{def:value}. Note
that the moving of commodities from $x\in A$ to the market $x_{m}$
is only an intuitive interpretation in land use theory, but in more
general IS we can also have the opposite movement.
\end{thm}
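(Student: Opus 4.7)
The plan is to argue by contradiction for the two-commodity case, essentially packaging the calculation already performed in the two paragraphs preceding the theorem into a clean contrapositive statement, and then to bootstrap from it to the full ordering. Concretely, suppose for contradiction that the impedance zones of $b$ and $\beta$ overlap in the sense $\overline{r}_{\beta}>\underline{r}_{b}$. By the definition of supremum in \eqref{eq:sup}, I can pick $x\in A$ good for $\beta$ with $\underline{r}_{b}<j(x)\le\overline{r}_{\beta}$; then, because $j(x)>\underline{r}_{b}$ strictly, the definition of infimum in \eqref{eq:inf} applied to $b$ produces $y\in A$ good for $b$ with $\underline{r}_{b}\le j(y)<j(x)$.

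The next step is to play the two ``good'' conditions against hypothesis (Tr). Since $x$ is good for $\beta$, Def.~\ref{def:value} gives $L_{\beta}(x)-k_{\beta}\ge L_{b}(x)-k_{b}$, i.e.\ $L_{\beta}(x)-L_{b}(x)\ge k_{\beta}-k_{b}$; since $y$ is good for $b$, it gives $L_{b}(y)-k_{b}\ge L_{\beta}(y)-k_{\beta}$. Now (Tr), which expands to $L_{\beta}(y)-L_{\beta}(x)\ge L_{b}(y)-L_{b}(x)$, rearranges as $L_{\beta}(y)-L_{b}(y)\ge L_{\beta}(x)-L_{b}(x)$, and chaining with the first inequality yields $L_{\beta}(y)-L_{b}(y)\ge k_{\beta}-k_{b}$, i.e.\ $L_{\beta}(y)-k_{\beta}\ge L_{b}(y)-k_{b}$. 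Combined with ``$y$ good for $b$'', this forces the equality $L_{\beta}(y)-k_{\beta}=L_{b}(y)-k_{b}$, directly contradicting (LV) at $y$.

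For the ordering part, the blanket assumption that every unordered pair of commodities satisfies either $\Delta\eta_{\beta}\ge\Delta\eta_{b}$ or $\Delta\eta_{b}\ge\Delta\eta_{\beta}$ pointwise on $A\times A$ makes the family $\{\Delta\eta_{b}\}_{b=1,\ldots,B}$ totally ordered under pointwise dominance; after relabelling according to that total order and applying the first part to every consecutive pair, one reads off the desired chain $\underline{r}_{b_{1}}\le\overline{r}_{b_{1}}\le\underline{r}_{b_{2}}\le\ldots\le\overline{r}_{b_{B}}$. The main obstacle I expect is not computational but purely organisational: keeping track of which commodity dominates which in (Tr) so that the sign of the derived inequality $L_{\beta}(y)-k_{\beta}\ge L_{b}(y)-k_{b}$ is correct, and verifying that the strict inequality $j(y)<j(x)$ required to distinguish the two good locations actually emerges from the sup/inf definitions (it does, precisely because one selected $j(x)$ strictly above $\underline{r}_{b}$). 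No ingredient beyond Def.~\ref{def:value}, the defining inequalities of $\underline{r}_{b}$ and $\overline{r}_{b}$, and elementary algebra is needed.
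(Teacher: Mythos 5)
Your proposal is correct and follows essentially the same route as the paper: the paper's own argument (in the paragraphs preceding the theorem) likewise assumes $\overline{r}_{\beta}>\underline{r}_{b}$, extracts $x$ good for $\beta$ and $y$ good for $b$ with $\underline{r}_{b}\le j(y)<j(x)\le\overline{r}_{\beta}$ from \eqref{eq:sup} and \eqref{eq:inf}, and uses \eqref{eq:Tr} together with the two ``good location'' inequalities to force $L_{\beta}(y)-k_{\beta}=L_{b}(y)-k_{b}$, contradicting \eqref{eq:LV}. Your algebra is just a slightly tidier rearrangement of the paper's displayed chain (cancelling $k_{\beta}-k_{b}$ up front), and your remark on totally ordering the $\Delta\eta_{b}$'s fills in the pairwise-ordering step the paper only asserts.
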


\noindent All this serves to underscores that we can have disjoint
impedance zones even if we do \emph{not} have an optimized economy,
i.e.~a complex adaptive economy. Therefore, this is not a peculiarity
of a CAS because we are not minimizing any cost nor maximizing any
diversification force. On the contrary, in the next section, we will
consider what happen when the subpopulation of companies producing
the same commodity adapts following the GEP, i.e.~evolves decreasing
natural costs in the most diversified way.

\subsection{Von Thünen's model and the generalized evolution principle}

In this section, we assume that our IS satisfies the GEP, i.e.~it
is a CAS, and it evolved into a stationary emergent pattern configuration,
where natural costs are minimum and diversification forces are maximum.
For simplicity, we still continue to use a language of land use theory,
for example assuming to have $n_{b}\in\N_{>0}$ companies producing
the commodity $b=1,\ldots,B$. However, it is clear that exactly the
same deductions apply to any IS where the interacting entities are
$E=\{(a,b)\mid b=1,\ldots,B,\ a=1,\ldots,n_{b}\}=\mathcal{P}$ and
we have a value function $L_{b}:A\ra\R$ and a cost function $k_{b}:A\ra\R$
for all $b=1,\ldots,B$. We do not actually even need an impedance
function $j:A\ra[i]_{\ge0}$ as above.

\subsubsection{Cost minimization}

In a non adapted economy, we do not necessarily have that rents coincide
with their ideal values \eqref{eq:idealRent}, e.g.~because of ignorance
of some agent with respect to the entire market configuration. For
simplicity, in this section we assume to consider only one market
$x_{m}$, so that all the prices, rents and values always refer to
$x_{m}$ and we will omit this variable. Then, the configuration space
of a generic economy located around the market $x_{m}$ is given by:
\begin{enumerate}
\item A location $x_{b}^{a}\in A$ for each company $a=1,\ldots,n_{b}$
producing the commodity $b=1,\ldots,B$. We assume that $x_{b}^{a_{1}}\ne x_{b}^{a_{2}}$
if $a_{1}\ne a_{2}$.
\item In each one of these locations, we have a rent $r_{b}^{a}\in\left[\frac{\text{€}}{\text{m}^{2}}\right]$
really applied to the company $a$ producing $b$ at $x_{b}^{a}$.
\end{enumerate}
\noindent Using these notations, we can model configurations representing
a positive cost for some agent acting as location renter or tenant
owning a company. For example the inequalities
\[
r_{b}^{a}<L_{b}(x_{b}^{i})-k_{b}(x_{b}^{a})<R(x_{b}^{a})
\]
imply
\begin{enumerate}
\item $r_{b}^{a}<L_{b}(x_{b}^{a})-k_{b}(x_{b}^{a})$: the location $x_{b}^{a}$
is rented at a price $r_{b}^{a}$ which is less than the land value
(it could be rented at a higher price).
\item $r_{b}^{a}<R(x_{b}^{a})$: the rent $r_{b}^{a}$ is less than the
maximum rent that it would be possible to ask in the location $x_{b}^{a}$
(at another company producing a different commodity).
\end{enumerate}
Considering $<$, $=$ or $>$, all the possible inequalities are
$3^{3}=27$, but several of them are mathematically impossible, repetitions,
or impossible relations due to the definition of $R(x)$ (see \eqref{eq:idealRent}).
Only the following possible inequalities remain:
\begin{defn}
\label{def:confSpaceEconomy}The configuration space $M=M_{\mathcal{P},I_{\mathcal{P}}}$
of the economies centered around the market $x_{m}$ is defined by
\[
(x_{b}^{1},r_{b}^{1},\ldots,x_{b}^{n_{b}},r_{b}^{n_{b}})_{b=1,\ldots,B}\in M
\]
if and only if for all $b=1,\ldots,B$ and all $a=1,\ldots,n_{b}$,
we have $x_{b}^{a}\in A$, $r_{b}^{a}\in\left[\frac{\text{€}}{\text{m}^{2}}\right]$,
$x_{b}^{a_{1}}\ne x_{b}^{a_{2}}$ if $a_{1}\ne a_{2}$, and at least
one of the following conditions is satisfied
\begin{align}
r_{b}^{a} & =L_{b}(x_{b}^{a})-k_{b}(x_{b}^{a})=R(x_{b}^{a})\label{eq:VT}\\
r_{b}^{a} & =L_{b}(x_{b}^{a})-k_{b}(x_{b}^{a})<R(x_{b}^{a})\nonumber \\
r_{b}^{a} & <L_{b}(x_{b}^{a})-k_{b}(x_{b}^{a})=R(x_{b}^{a})\nonumber \\
r_{b}^{a} & <L_{b}(x_{b}^{a})-k_{b}(x_{b}^{a})<R(x_{b}^{a})\nonumber \\
r_{b}^{a} & >L_{b}(x_{b}^{a})-k_{b}(x_{b}^{a})=R(x_{b}^{a})\nonumber 
\end{align}
\begin{align}
r_{b}^{a}=R(x_{b}^{a}) & >L_{b}(x_{b}^{a})-k_{b}(x_{b}^{a})\label{eq:ineq2}\\
r_{b}^{a}>R(x_{b}^{a}) & >L_{b}(x_{b}^{a})-k_{b}(x_{b}^{a})\nonumber \\
R(x_{b}^{a})>r_{b}^{a} & >L_{b}(x_{b}^{a})-k_{b}(x_{b}^{a}).\nonumber 
\end{align}
We also use the simplified notation $(x,r)=(x_{b}^{1},r_{b}^{1},\ldots,x_{b}^{n_{b}},r_{b}^{n_{b}})_{b=1,\ldots,B}$
to denote a configuration. We will think at $(x_{b}^{a},r_{b}^{a})$
as two components of the proper state space of the interacting entity
$(a,b)\in\mathcal{P}$, i.e.~the company $a=1,\ldots,n_{b}$ that
produces the commodity $b=1,\ldots,B$.
\end{defn}

\noindent The first one \eqref{eq:VT} of these conditions is called
\emph{von Thünen configuration}. Each one of these, except the von
Thünen one, corresponds to a possible configuration of an economy
where at least one of its agents is paying a cost or is loosing a
profit:
\begin{defn}
\label{def:costs1}Let $b=1,\ldots,B$, $a=1,\ldots,n_{b}$ and $(x_{b}^{1},r_{b}^{1},\ldots,x_{b}^{n_{b}},r_{b}^{n_{b}})_{b=1,\ldots,B}\in M$,
then
\begin{enumerate}
\item The cost paid by the tenant/company $a$, which produces the commodity
$b$, is located in $x_{b}^{a}$ and pays the rent $r_{b}^{a}$ is
\[
c_{\text{t}}(x_{b}^{a},r_{b}^{a}):=\begin{cases}
r_{b}^{a}-\left[L_{b}(x_{b}^{a})-k_{b}(x_{b}^{a})\right] & \text{if }r_{b}^{a}>L_{b}(x_{b}^{a})-k_{b}(x_{b}^{a})\\
c_{0\text{t}} & \text{otherwise}.
\end{cases}
\]
The quantity $c_{0\text{t}}\ge0$ represents a minimum non avoidable
cost related to the rent of this location (e.g.~administrative cost)
and paid by the tenant. We assume $c_{0\text{t}}$ sufficiently small,
i.e.~satisfying
\begin{equation}
r_{b}^{a}>L_{b}(x_{b}^{a})-k_{b}(x_{b}^{a})\ \Rightarrow\ c_{0\text{t}}<c_{\text{t}}(x_{b}^{a},r_{b}^{a})\label{eq:c_0t-small}
\end{equation}
for all $a$, $b$ (note that this implies $c_{0\text{t}}\le c_{\text{t}}(x_{b}^{a},r_{b}^{a})$,
and if $c_{0\text{t}}=c_{\text{t}}(x_{b}^{a},r_{b}^{a})$, then necessarily
$r_{b}^{a}\le L_{b}(x_{b}^{a})-k_{b}(x_{b}^{a})$). Clearly, if $r_{b}^{a}>L_{b}(x_{b}^{a})-k_{b}(x_{b}^{a})$,
the company $a$ is paying a rent $r_{b}^{a}$ higher than the company's
gain $L_{b}(x_{b}^{a})-k_{b}(x_{b}^{a})$, and it is hence at a loss.
\item The loss of profit of the renter of the location $x_{b}^{a}$ at rent
$r_{b}^{a}$ for the production of the commodity $b$ is
\[
l_{\text{r}}^{1}(x_{b}^{a},r_{b}^{a}):=\begin{cases}
\left[L_{b}(x_{b}^{a})-k_{b}(x_{b}^{a})\right]-r_{b}^{a} & \text{if }r_{b}^{a}<L_{b}(x_{b}^{a})-k_{b}(x_{b}^{a})\\
l_{1\text{r}} & \text{otherwise}.
\end{cases}
\]
The quantity $l_{1r}\ge0$ represents a minimum non avoidable cost
related to the rent of this location (e.g.~some tax or the cost to
know the land value $L_{b}(x_{b}^{a})$ and the cost of life $k_{b}(x_{b}^{a})$)
and paid by the renter. We assume $l_{1\text{r}}$ sufficiently small,
i.e.
\begin{equation}
r_{b}^{a}<L_{b}(x_{b}^{a})-k_{b}(x_{b}^{a})\ \Rightarrow\ l_{1\text{r}}<l_{\text{r}}^{1}(x_{b}^{a},r_{b}^{a})\label{eq:l_1r-small}
\end{equation}
for all $a$, $b$ (as above, this yields $l_{1\text{r}}\le l_{\text{r}}^{1}(x_{b}^{a},r_{b}^{a})$,
and if $l_{1\text{r}}=l_{\text{r}}^{1}(x_{b}^{a},r_{b}^{a})$, then
$r_{b}^{a}\ge L_{b}(x_{b}^{a})-k_{b}(x_{b}^{a})$). If $r_{b}^{a}<L_{b}(x_{b}^{a})-k_{b}(x_{b}^{a})$,
the renter is asking a lower rent $r_{b}^{a}$ with respect to the
better value $L_{b}(x_{b}^{a})-k_{b}(x_{b}^{a})$.
\item The loss of profit of the renter of the location $x_{b}^{a}$ at rent
$r_{b}^{a}$ with respect to all the possible commodities is
\[
l_{\text{r}}^{2}(x_{b}^{a},r_{b}^{a}):=\begin{cases}
R(x_{b}^{a})-r_{b}^{a} & \text{if }r_{b}^{a}<R(x_{b}^{a})\\
l_{2\text{r}} & \text{otherwise}
\end{cases}
\]
The quantity $l_{2r}\ge0$ represents a minimum non avoidable cost
related to the rent of this location (e.g.~some tax or the cost to
know the ideal rent $R(x_{b}^{b})$) and paid by the renter. We assume
$l_{2\text{r}}$ sufficiently small, i.e.
\begin{equation}
r_{b}^{a}<R(x_{b}^{a})\ \Rightarrow\ l_{2\text{r}}<l_{\text{r}}^{2}(x_{b}^{a},r_{b}^{a})\label{eq:l_2r-small}
\end{equation}
for all $a$, $b$. Again: $l_{2\text{r}}\le l_{\text{r}}^{2}(x_{b}^{a},r_{b}^{a})$
and if $l_{2\text{r}}=l_{\text{r}}^{2}(x_{b}^{a},r_{b}^{a})$, then
$r_{b}^{a}\ge R(x_{b}^{a})$. If $r_{b}^{a}<R(x_{b}^{a})$, the renter
is asking a rent $r_{b}^{a}$ lower that the best possible one $R(x_{b}^{a})$.
\end{enumerate}
\end{defn}

\noindent To these costs and losses, we associate the following probabilities
and hence expected costs and losses
\begin{defn}
\label{def:probCostsLosses}Let $(x,r)\in M$, then
\begin{enumerate}
\item The cost of the tenant $C_{\text{t}}$ can be computed as
\begin{align}
C_{\text{t}}(x,r): & =\sum_{b=1}^{B}\sum_{a=1}^{n_{b}}p_{\text{t}}(x_{b}^{a},r_{b}^{a})\cdot c_{\text{t}}(x_{b}^{a},r_{b}^{a}),\label{eq:costVT1}
\end{align}
where $\left(p_{\text{t}}(x_{b}^{a},r_{b}^{a})\right)_{a,b}$ is \emph{an}y
fixed \emph{non-degenerate} probability distribution, i.e.~$p_{\text{t}}(x_{b}^{a},r_{b}^{a})>0$
for all $a$, $b$. Note that $C_{\text{t}}(x,r)\ge c_{0\text{t}}$
because $c_{0\text{t}}\le c_{\text{t}}(x_{b}^{a},r_{b}^{a})$ for
all $a$, $b$, and that the probability $\left(p_{\text{t}}(x_{b}^{a},r_{b}^{a})\right)_{a,b}$
may depend on the cost $C_{\text{t}}$. Moreover, because of \eqref{eq:c_0t-small}
and the non-degenerateness condition, if $C_{\text{t}}(x,r)=c_{0\text{t}}$,
then $c_{\text{t}}(x_{b}^{a},r_{b}^{a})=c_{0\text{t}}$ for all $a$,
$b$.
\item Let $j=1,2$, the losses of profits of the renter can be computed
as
\begin{align}
L_{\text{r}}^{j}(x,r): & =\sum_{b=1}^{B}\sum_{a=1}^{n_{b}}p_{\text{r}}^{j}(x_{b}^{a},r_{b}^{a})\cdot l_{\text{r}}^{j}(x_{b}^{a},r_{b}^{a}),\label{eq:costVT2}
\end{align}
where $\left(p_{\text{r}}^{j}(x_{b}^{a},r_{b}^{a})\right)_{a,b}$,
for $j=1,2$, are \emph{any} fixed \emph{non-degenerate} probability
distribution, i.e.~$p_{\text{r}}^{j}(x_{b}^{a},r_{b}^{a})>0$ for
all $a$, $b$. As above $L_{\text{r}}^{j}(x,r)\ge l_{j\text{r}}$,
and if $L_{\text{r}}^{j}(x,r)=l_{j\text{r}}$, then $l_{\text{r}}^{j}(x_{b}^{a},r_{b}^{a})=l_{j\text{r}}$
for all $a$, $b$, because of \eqref{eq:l_1r-small} and \eqref{eq:l_2r-small}
and the non-degenerateness condition.
\end{enumerate}
\end{defn}

\noindent Compare \eqref{eq:costVT1}, \eqref{eq:costVT2} with \eqref{eq:UnificationFinite}
to recognize that $P_{(x,r)}^{\text{t}}(a,b):=p_{\text{t}}(x_{b}^{a},r_{b}^{a})$
and $P_{(x,r)}^{j,\text{r}}(a,b):=p_{\text{r}}^{j}(x_{b}^{a},r_{b}^{a})$
are the probabilities to average the considered unification costs.

\noindent These expected costs and losses are minimum at a von Thünen
configuration: $r_{b}^{a}=L_{b}(x_{b}^{a})-k_{b}(x_{b}^{a})=R(x_{b}^{a})$
for all $a$, $b$. Vice versa, when a given configuration $(x_{b}^{1},r_{b}^{1},\ldots,x_{b}^{n_{b}},r_{b}^{n_{b}})_{b=1,\ldots,B}\in M$
minimize costs and losses, is it a von Thünen configuration? In general
this is false: let $A=\{y_{1},y_{2}\}$, $B=2$, with
\begin{align*}
L_{1}(y_{j})-k_{1}(y_{j}) & =1\frac{\text{€}}{\text{m}^{2}}\\
L_{2}(y_{j})-k_{2}(y_{j}) & =2\frac{\text{€}}{\text{m}^{2}}.
\end{align*}
That is the commodity $b_{1}=1$ has a net land value equal to 1$\frac{\text{€}}{\text{m}^{2}}$
in both locations $y_{1}$, $y_{2}$, whereas the commodity $b_{2}=2$
has a double net land value in both locations. Assume, by contradiction,
that $(x_{1}^{1},r_{1}^{1},x_{2}^{2},r_{2}^{2})\in M$ minimizes costs
and losses. We would have
\begin{align}
c_{\text{t}}(x_{1}^{1},r_{1}^{1}) & =c_{0\text{t}}\iff r_{1}^{1}\le L_{1}(x_{1}^{1})-k_{1}(x_{1}^{1})=1\frac{\text{€}}{\text{m}^{2}}\nonumber \\
l_{r}^{1}(x_{1}^{1},r_{1}^{1}) & =l_{1\text{r}}\iff r_{1}^{1}\ge L_{1}(x_{1}^{1})-k_{1}(x_{1}^{1})=1\frac{\text{€}}{\text{m}^{2}}\nonumber \\
l_{r}^{2}(x_{1}^{1},r_{1}^{1}) & =l_{2\text{r}}\iff r_{1}^{1}\ge R(x_{1}^{1})=2\frac{\text{€}}{\text{m}^{2}}.\label{eq:lossNoVT}
\end{align}
This contradiction proves that, in this system, no configuration minimizes
costs and losses to the minimum values $c_{0\text{t}}$, $l_{1\text{r}}$,
$l_{2\text{r}}$. In other words, this system does not allow for a
von Thünen configuration. The problem here is that the system is lacking
in possible configurations. This justifies the main assumption \eqref{eq:existGoodLocations}
of the following
\begin{thm}
\label{thm:VTconf}If any company can be situated in a good location:
\begin{equation}
\forall b=1,\ldots,B\,\forall a=1,\ldots,n_{b}\,\exists y\in A:\ y\text{ is good for }b,\label{eq:existGoodLocations}
\end{equation}
then a given configuration $(x,r)\in M$ minimizes the average costs
and losses
\begin{align}
C_{\text{t}}(x,r) & \le C_{\text{t}}(y,s)\nonumber \\
L_{\text{r}}^{1}(x,r) & \le L_{\text{r}}^{1}(y,s)\quad\forall(y,s)\in M\label{eq:minim}\\
L_{\text{r}}^{2}(x,r) & \le L_{\text{r}}^{2}(y,s)\nonumber 
\end{align}
if and only if $(x,r)$ is a von Thünen configuration, i.e.
\[
r_{b}^{a}=L_{b}(x_{b}^{a})-k_{b}(x_{b}^{a})=R(x_{b}^{a})\quad\forall a,b.
\]
The same result holds in any IS having a value function $L_{b}:A\ra\R$
and a cost function $k_{b}:A\ra\R$ for each interacting entities
$(a,b)\in\mathcal{P}$ in a finite population, and for the quantities
as defined in Def.~\ref{def:probCostsLosses}.
\end{thm}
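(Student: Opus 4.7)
\textbf{Proof plan for Theorem \ref{thm:VTconf}.}
The plan is a double implication argument in which the existence hypothesis \eqref{eq:existGoodLocations} is used only for the ``only if'' direction, namely to certify that the minimum possible value of each averaged cost is actually attained by some configuration in $M$.

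For the ``if'' direction, I would directly verify that a von Thünen configuration $(x,r)$, characterised by $r_{b}^{a}=L_{b}(x_{b}^{a})-k_{b}(x_{b}^{a})=R(x_{b}^{a})$, places every $(x_{b}^{a},r_{b}^{a})$ in the ``otherwise'' branch of the piecewise definitions of $c_{\text{t}}$, $l_{\text{r}}^{1}$, and $l_{\text{r}}^{2}$ in Def.~\ref{def:costs1}. Hence, pointwise, $c_{\text{t}}(x_{b}^{a},r_{b}^{a})=c_{0\text{t}}$, $l_{\text{r}}^{1}(x_{b}^{a},r_{b}^{a})=l_{1\text{r}}$, and $l_{\text{r}}^{2}(x_{b}^{a},r_{b}^{a})=l_{2\text{r}}$ for all $a,b$, so the probability-weighted sums \eqref{eq:costVT1} and \eqref{eq:costVT2} collapse to $c_{0\text{t}}$, $l_{1\text{r}}$, $l_{2\text{r}}$ respectively. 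Since Def.~\ref{def:probCostsLosses} already notes that $C_{\text{t}}(y,s)\ge c_{0\text{t}}$ and $L_{\text{r}}^{j}(y,s)\ge l_{j\text{r}}$ for every $(y,s)\in M$, the von Thünen configuration realises the absolute minima in \eqref{eq:minim}.

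For the ``only if'' direction, I first use \eqref{eq:existGoodLocations} to build an explicit von Thünen configuration $(\bar{x},\bar{r})\in M$: for each $b$ and each $a=1,\ldots,n_{b}$ I pick a good location $\bar{x}_{b}^{a}\in A$ for $b$ and set $\bar{r}_{b}^{a}:=L_{b}(\bar{x}_{b}^{a})-k_{b}(\bar{x}_{b}^{a})=R(\bar{x}_{b}^{a})$; by the ``if'' direction this configuration attains $C_{\text{t}}(\bar{x},\bar{r})=c_{0\text{t}}$ and $L_{\text{r}}^{j}(\bar{x},\bar{r})=l_{j\text{r}}$. The assumed minimality of $(x,r)$ combined with the pointwise lower bounds then forces $C_{\text{t}}(x,r)=c_{0\text{t}}$ and $L_{\text{r}}^{j}(x,r)=l_{j\text{r}}$. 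Non-degenerateness of the probabilities $p_{\text{t}}$ and $p_{\text{r}}^{j}$, together with the pointwise inequalities $c_{\text{t}}\ge c_{0\text{t}}$ and $l_{\text{r}}^{j}\ge l_{j\text{r}}$, upgrade these averaged equalities to the pointwise equalities $c_{\text{t}}(x_{b}^{a},r_{b}^{a})=c_{0\text{t}}$ and $l_{\text{r}}^{j}(x_{b}^{a},r_{b}^{a})=l_{j\text{r}}$ for every $a,b$. Finally, the smallness assumptions \eqref{eq:c_0t-small}, \eqref{eq:l_1r-small} and \eqref{eq:l_2r-small} exclude the ``strict inequality'' branches of the three piecewise definitions, so these pointwise equalities translate into $r_{b}^{a}\le L_{b}(x_{b}^{a})-k_{b}(x_{b}^{a})$, $r_{b}^{a}\ge L_{b}(x_{b}^{a})-k_{b}(x_{b}^{a})$, and $r_{b}^{a}\ge R(x_{b}^{a})$; combined with the trivial upper bound $L_{b}(x_{b}^{a})-k_{b}(x_{b}^{a})\le R(x_{b}^{a})$ coming from \eqref{eq:idealRent}, these collapse to $r_{b}^{a}=L_{b}(x_{b}^{a})-k_{b}(x_{b}^{a})=R(x_{b}^{a})$, which exactly says that $(x,r)$ is a von Thünen configuration.

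The main delicate point I anticipate is in the construction of the reference configuration $(\bar{x},\bar{r})$: Def.~\ref{def:confSpaceEconomy} requires $\bar{x}_{b}^{a_{1}}\ne\bar{x}_{b}^{a_{2}}$ whenever $a_{1}\ne a_{2}$, while a literal reading of \eqref{eq:existGoodLocations} only asserts that the set of good locations for each commodity $b$ is non-empty. I would handle this by reading \eqref{eq:existGoodLocations} as implicitly supplying at least $n_{b}$ distinct good locations per commodity (which is the intended modelling assumption, and which can also be extracted by applying the stated condition company by company and observing that the index $a$ is free in the quantifier); all remaining steps are routine consequences of the ``equality in a strictly positive average forces pointwise equality'' principle and of the piecewise structure of $c_{\text{t}}$, $l_{\text{r}}^{1}$, $l_{\text{r}}^{2}$ tightened by the smallness conditions on $c_{0\text{t}}$, $l_{1\text{r}}$, $l_{2\text{r}}$.
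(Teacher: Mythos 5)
Your proposal is correct and follows essentially the same route as the paper's own proof: the ``if'' direction by pointwise evaluation in the ``otherwise'' branches of Def.~\ref{def:costs1}, and the ``only if'' direction by constructing a reference von Thünen configuration from \eqref{eq:existGoodLocations}, forcing $C_{\text{t}}(x,r)=c_{0\text{t}}$ and $L_{\text{r}}^{j}(x,r)=l_{j\text{r}}$, then using non-degenerateness of the averaging probabilities together with \eqref{eq:c_0t-small}, \eqref{eq:l_1r-small}, \eqref{eq:l_2r-small} and $L_{b}(x_{b}^{a})-k_{b}(x_{b}^{a})\le R(x_{b}^{a})$ to conclude $r_{b}^{a}=L_{b}(x_{b}^{a})-k_{b}(x_{b}^{a})=R(x_{b}^{a})$. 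The distinctness point you flag (needing $n_{b}$ distinct good locations per commodity so that the reference configuration actually lies in $M$) is genuine but is glossed over in the paper's proof as well, and your reading of \eqref{eq:existGoodLocations} is the intended one.
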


\begin{proof}
If $(x,r)$ is a von Thünen configuration, then $C_{\text{t}}(x,r)=c_{0\text{t}}\le C_{t}(y,s)$,
$L_{\text{r}}^{1}(x,r)=l_{1\text{r}}\le L_{\text{r}}^{1}(y,s)$, $L_{\text{r}}^{2}(x,r)=l_{2\text{r}}\le L_{\text{r}}^{2}(y,s)$
by Def.~\ref{def:costs1}, so that \eqref{eq:minim} hold. Vice versa,
using assumption \eqref{eq:existGoodLocations}, we can construct
a von Thünen configuration $(y,s)$ by choosing a good location for
every commodity:
\[
\forall b,a\,\exists y_{b}^{a}\in A:\ R(y_{b}^{a})=L_{b}(y_{b}^{a})-k_{b}(y_{b}^{a}).
\]
But \eqref{eq:minim} yields
\begin{align*}
c_{0\text{t}} & \le C_{\text{t}}(x,r)\le C_{\text{t}}(y,s)=c_{0\text{t}}\\
l_{1\text{r}} & \le L_{\text{r}}^{1}(x,r)\le L_{\text{r}}^{1}(y,s)=l_{1\text{r}}\\
l_{2\text{r}} & \le L_{\text{r}}^{2}(x,r)\le L_{\text{r}}^{2}(y,s)=l_{2\text{r}}.
\end{align*}
Therefore, $r_{b}^{a}=L_{b}(x_{b}^{a})-k_{b}(x_{b}^{a})\ge R(x_{b}^{a})$
because of Def.~\ref{def:costs1} and Def.~\ref{def:probCostsLosses}.
Therefore, also $(x,r)$ is necessarily a von Thünen configuration
by \eqref{eq:VT}.
\end{proof}
Clearly, assumption \eqref{eq:existGoodLocations} is not completely
realistic in real-world economies. Once again, this underscores that
what cost functions to consider in this kind of economic models, is
a modeling/\-philosophical/political choice. Depending on our political
choices, other types of costs can be considered, such as: environmental
costs, energy consumption, loss of profits, state's costs due to loss
of job places because of transfer of branches, company's stock price,
etc. One can also argue that also the maximization of profit is unrealistic
because it implies the full knowledge of the entire system and unrealistic
tendencies, such as that of trying the renting of centered locations
to jewelries as soon as the demand of jewels increases.

In another political approach, we can denote by $\text{tax}(x_{b}^{a})$
the real estate tax that must be paid by the owner (renter) of the
real estate property located (and rented) at $x_{b}^{a}$; then, we
can consider the previous cost of the tenant $C_{\text{t}}$ and,
instead of the losses of profits $L_{\text{r}}^{j}$, and examine
the cost
\begin{equation}
c_{\text{r}}(x_{b}^{a},r_{b}^{a}):=\begin{cases}
\text{tax}(x_{b}^{a})-r_{b}^{a} & \text{if }r_{b}^{a}<\text{tax}(x_{b}^{a})\\
0 & \text{otherwise},
\end{cases}\label{eq:costRenter}
\end{equation}
and the corresponding average cost $C_{\text{r}}$. In this model,
any configuration where $r_{b}^{a}\le L_{b}(x_{b}^{a})-k_{b}(x_{b}^{a})$
and $r_{b}^{a}\ge\text{tax}(x_{b}^{a})$ minimizes the costs $C_{\text{t}}$
and $C_{\text{r}}$. We have hence a more realistic model without
any assumption of profits maximization or of full information about
the market.

\subsubsection{\label{subsec:Maximization-of-diversification}Maximization of diversification
forces}

A very natural flux of goods is the amount of commodity $b=1,\ldots,B$
that each company $a=1,\ldots,n_{b}$ is selling to the market $x_{m}$.
We therefore want to see what the maximization of the diversification
forces yields if we consider these interactions. We think at a generic
IS where interacting entities are $E=\{(a,b)\mid b=1,\ldots,B,\ a=1,\ldots,n_{b}\}\cup\{x_{m}\}$,
even if we still keep a language of land use theory. The population
we want to consider is hence $\mathcal{P}_{b}=\left\{ (a,b)\in E\mid a=1,\ldots,n_{b}\right\} $
for each fixed $b=1,\ldots,B$, i.e.~the collection of the aforementioned
companies, so that its cardinality is $\left|\mathcal{P}_{b}\right|=n_{b}$.
We can therefore think at ``selling'' interactions $i_{b}^{a}:(a,b)\xra{r_{b}^{a},\text{s}}x_{m}$
having the company $(a,b)$ as agent and the market $x_{m}$ as patient,
and hence we consider $I_{\mathcal{P}_{b}}:=\{i_{b}^{a}\mid a=1,\ldots,n_{b}\}$.
The good of $i_{b}^{a}$ is a quantity $\phi_{b}^{a}\in(0,Q_{b}]=:R_{i_{b}^{a}}$
belonging to the space of resources $R_{i_{b}^{a}}$ of the interaction
$i_{b}^{a}$. Using an example only to understand, in a simpler deterministic
model of constant production, if $s_{b}^{a}$ is the surface of the
company $a$ used for the production of $1\textsf{ year}$ of $b$
(more generally, the amount of units of production $v_{b}$ used in
$1\textsf{ year}$, see definition \ref{enu:yield} in Sec.~\ref{subsec:VT-impedanceZones}),
then we have
\begin{align*}
\phi_{b}^{a} & =y_{b}(x_{b}^{a})\cdot s_{b}^{a}\in\left[\frac{u_{b}}{\textsf{year}}\right]\\
\Delta_{b} & =\sum_{a=1}^{n_{b}}\phi_{b}^{a}.
\end{align*}
We can thus think at $\Delta_{b}$ as the amount of commodity $b$
demanded by $x_{m}$ in $1\textsf{ year}$.

In general, i.e.~independently from this particular example of land
use theory, we consider a global state space $\bar{M}_{\mathcal{P}_{b},I_{\mathcal{P}_{b}}}$
such that
\begin{equation}
R_{\mathcal{P}_{b}}=\left\{ (\phi_{b}^{1},\ldots,\phi_{b}^{n_{b}})\in(0,\Delta_{b}]^{n_{b}}\mid\sum_{a=1}^{n_{b}}\phi_{b}^{a}=\Delta_{b}\right\} ,\label{eq:globalResources}
\end{equation}
and the diversification probability corresponding to a Bernoulli process
\begin{equation}
Q_{\gamma}(i_{b}^{a}):=\frac{\gamma^{a}}{\Delta_{b}}=\frac{\phi_{b}^{a}}{\Delta_{b}}=:q_{b}^{a}\quad\forall\gamma=(\phi_{b}^{1},\ldots,\phi_{b}^{n_{b}})\in R_{\mathcal{P}_{b}}.\label{eq:diversProbVT}
\end{equation}
In other words, the probability to extract a unit of commodity $b$
produced by the company $a$, among all those flowing to the market
$x_{m}$ in one year, equals the fraction $\frac{\phi_{b}^{a}}{\Delta_{b}}$
of goods $\phi_{b}^{a}$ produced by $a$ over the demanded total
$\Delta_{b}$. 

Assume now that the system is in an emergent pattern state $\gamma=(\phi_{b}^{1},\ldots,\phi_{b}^{n_{b}})$.
We have that
\begin{align}
D_{I_{\mathcal{P}_{b}}}(\gamma) & =:D_{I_{\mathcal{P}_{b}}}(\phi_{b}^{1},\ldots,\phi_{b}^{n_{b}})=-\sum_{i\in I_{\mathcal{P}_{b}}}Q_{\gamma}(i)\cdot\log_{2}Q_{\gamma}(i)\label{eq:maxDivers}\\
 & =-\sum_{a=1}^{n_{b}}q_{b}^{a}\cdot\log_{2}q_{b}^{a}\ge D_{I_{\mathcal{P}_{b}}}(s)\quad\forall s.
\end{align}
The necessary result yielded by this maximization is well known, and
it can be briefly repeated here: using the Lagrange's multiplier method,
we get
\begin{multline*}
\exists\lambda\in\R:\ \left.\frac{\partial D_{I_{\mathcal{P}_{b}}}(\gamma^{1},\ldots,\gamma^{n_{b}})}{\partial\gamma^{a}}\right|_{\gamma^{(-)}=\phi_{b}^{(-)}}=\\
=\frac{\partial}{\partial\gamma^{a}}\left.\left(-\sum_{a=1}^{n_{b}}\frac{\gamma^{a}}{\Delta_{b}}\cdot\log_{2}\frac{\gamma^{a}}{\Delta_{b}}\right)\right|_{\gamma^{(-)}=\phi_{b}^{(-)}}=\\
=\lambda\cdot\frac{\partial}{\partial\gamma^{a}}\left.\left(\sum_{a=1}^{n_{b}}\frac{\gamma^{a}}{\Delta_{b}}-1\right)\right|_{\gamma^{(-)}=\phi_{b}^{(-)}}.
\end{multline*}

\noindent This gives $\phi_{b}^{a}=\frac{2^{-\lambda}}{e\Delta_{b}}$,
so that $\Delta_{b}=\sum_{a=1}^{n_{b}}\phi_{b}^{a}=n_{b}\cdot\frac{2^{-\lambda}}{e\Delta_{b}}$
and
\begin{equation}
\phi_{b}^{a}=\frac{\Delta_{b}}{n_{b}}.\label{eq:equiFluxes}
\end{equation}
We can state this result as a general
\begin{thm}
\label{thm:VT-GEP}Let $\mathcal{I}$ be an IS with interacting entities
$E=\{(a,b)\mid b=1,\ldots,B,\ a=1,\ldots,n_{b}\}\cup\{x_{m}\}$. Consider
the population $\mathcal{P}_{b}=\left\{ (a,b)\in E\mid a=1,\ldots,n_{b}\right\} $
for each fixed $b=1,\ldots,B$ and with adapting interactions $I_{\mathcal{P}_{b}}:=\{i_{b}^{a}\mid a=1,\ldots,n_{b}\}$,
where $i_{b}^{a}:(a,b)\xra{r_{b}^{a},\text{s}}x_{m}$, with global
space of resources given by \eqref{eq:globalResources} and diversification
Bernoulli probabilities \eqref{eq:diversProbVT}. Then $\gamma=(\phi_{b}^{1},\ldots,\phi_{b}^{n_{b}})\in R_{\mathcal{P}_{b}}$
is a pattern with maximum diversification $D_{I_{\mathcal{P}_{b}}}$
if and only if $\phi_{b}^{a}=\frac{\Delta_{b}}{n_{b}}$ holds for
all $(a,b)\in E$.
\end{thm}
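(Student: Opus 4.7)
The plan is to reduce the statement to the classical fact that the Shannon entropy of a probability distribution on a finite set of cardinality $n_b$ is uniquely maximized, over all distributions on that set, by the uniform distribution. Concretely, I would set $q_b^a := \phi_b^a/\Delta_b$ so that, by definition of $R_{\mathcal{P}_b}$ in \eqref{eq:globalResources}, the tuple $(q_b^1,\dots,q_b^{n_b})$ ranges over the (relative interior of the) standard simplex $\{q\in(0,1]^{n_b}\mid\sum_a q_b^a=1\}$. Under this bijective change of variables \eqref{eq:diversProbVT} gives $Q_\gamma(i_b^a)=q_b^a$, and \eqref{eq:maxDivers} becomes $D_{I_{\mathcal{P}_b}}(\gamma)=H(q_b^1,\dots,q_b^{n_b})$, where $H$ denotes the Shannon entropy. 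Maximizing $D_{I_{\mathcal{P}_b}}$ over $R_{\mathcal{P}_b}$ is thus equivalent to maximizing $H$ over the open simplex, and $\phi_b^a=\Delta_b/n_b$ corresponds exactly to $q_b^a=1/n_b$.

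For the ``if'' direction, I would simply substitute $q_b^a=1/n_b$ to obtain $H=\log_2 n_b$ and then show that this value is an upper bound for every admissible $\gamma$. The cleanest route is Gibbs' inequality (equivalently the nonnegativity of Kullback--Leibler divergence): for any probability $(q_b^a)_a$ on the set of indices $a=1,\ldots,n_b$,
\[
H(q_b^1,\ldots,q_b^{n_b})-\log_2 n_b=-\sum_{a=1}^{n_b} q_b^a\log_2\!\bigl(n_b\,q_b^a\bigr)\le 0,
\]
with equality precisely when $n_b q_b^a=1$ for every $a$. This follows from the elementary inequality $\ln x\le x-1$ (with equality iff $x=1$) applied to $x=1/(n_b q_b^a)$, which gives strictness as soon as some $q_b^a$ deviates from $1/n_b$. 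This simultaneously yields the ``only if'' direction and the uniqueness of the maximizer among strictly positive distributions, which is exactly what the domain $(0,\Delta_b]^{n_b}$ prescribes.

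As an alternative I would keep the Lagrange-multiplier sketch that already appears before the statement, but upgrade it so that it actually proves a global maximum and not merely a critical point: the author's computation shows $\phi_b^a=\Delta_b/n_b$ is the unique interior critical point of $D_{I_{\mathcal{P}_b}}$ on the affine constraint $\sum_a\phi_b^a=\Delta_b$, and strict concavity of the entropy (immediate from the strict concavity of $t\mapsto -t\log_2 t$ on $(0,1]$ together with the linearity of the constraint) then promotes this critical point to the unique global maximizer. Boundary behaviour $\phi_b^a\to 0$ is harmless because $-t\log_2 t\to 0$ and the sum of the remaining entropy terms is still bounded above by $\log_2(n_b-1)<\log_2 n_b$, so no maximum escapes to the boundary.

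The genuinely routine obstacle is just to make the chain of identifications careful: one must verify that the stated $R_{\mathcal{P}_b}$ is precisely the set on which $(q_b^a)_a$ ranges over the open simplex (so that the uniform point $\Delta_b/n_b$ is admissible, i.e.\ lies in $(0,\Delta_b]$ and sums to $\Delta_b$), and that \eqref{eq:diversProbVT} makes $Q_\gamma$ a \emph{bona fide} probability on $I_{\mathcal{P}_b}$. Apart from this bookkeeping, everything is a direct invocation of the maximum entropy principle on a finite alphabet, so there is no substantial difficulty.
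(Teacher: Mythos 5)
Your proof is correct, and it takes a somewhat different route from the paper's. The paper proves Thm.~\ref{thm:VT-GEP} by the brief Lagrange-multiplier computation displayed just before the statement: it derives the first-order condition on the constraint $\sum_{a}\phi_{b}^{a}=\Delta_{b}$, concludes that all $\phi_{b}^{a}$ are equal, and hence obtains \eqref{eq:equiFluxes}, invoking the ``well known'' maximum-entropy fact rather than verifying that this critical point is the global maximum or that no supremum is approached near the boundary of $(0,\Delta_{b}]^{n_{b}}$. Your primary argument instead normalizes $q_{b}^{a}=\phi_{b}^{a}/\Delta_{b}$ and applies Gibbs' inequality (nonnegativity of the Kullback--Leibler divergence from the uniform distribution), which yields $D_{I_{\mathcal{P}_{b}}}(\gamma)\le\log_{2}n_{b}$ with equality exactly at $q_{b}^{a}=1/n_{b}$; this gives both implications of the ``if and only if'' and the uniqueness of the maximizer in one stroke, with no critical-point analysis and no boundary discussion needed, since the bound holds globally on the simplex. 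Your secondary route is essentially the paper's Lagrange argument, but upgraded with strict concavity of entropy and a boundary estimate so that the critical point is genuinely promoted to the unique global maximizer; this is exactly the completion the paper's sketch leaves implicit. In short, the paper's derivation is shorter and mirrors the standard physics-style computation, while your Gibbs-inequality proof is self-contained, establishes the equality case rigorously, and needs only the elementary inequality $\ln x\le x-1$; the bookkeeping you flag (that $\Delta_{b}/n_{b}$ lies in $R_{\mathcal{P}_{b}}$ as defined in \eqref{eq:globalResources} and that \eqref{eq:diversProbVT} defines a genuine probability on $I_{\mathcal{P}_{b}}$) is indeed all that remains, and is immediate.
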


\noindent In land use theory, fluxes of commodities $b$ are hence
equally divided among all companies. We can interpret this property
as a resilience characteristic of the adapted population $\mathcal{P}_{b}$,
due to the absence of monopolies, maximal distribution of work, etc.
We can hence say that in this case the GEP coincides with a well studied
property of stable economies. In other words, if the population $\mathcal{P}_{b}$
follows the GEP, then the companies try to move in different configuration
$(x,r)\in M$ so that to decrease the average costs of Def.~\ref{def:costs1}
and, at the same time, thanks to some forces not explicitly represented
in the model (e.g.~a suitable taxation system or a clever use of
resources or a population tendency to be resilient), they also evolve
so that to satisfy \eqref{eq:equiFluxes}, i.e.~the absence of monopolies
and a maximal \emph{distribution} of work and resources.

Equation \eqref{eq:equiFluxes} can also be obtained from Thm.~\ref{thm:powerLaw}
and considering costs such as those of Def.~\ref{def:costs1} or
\eqref{eq:costRenter} (in general any cost function $E_{\mathcal{P}}$
that does not depend on the probabilities, so that the corresponding
derivative in \eqref{eq:hpParC} is zero). In Thm.~\ref{thm:powerLaw},
we can set $d=n_{b}$ and $x_{j}=\frac{\phi_{b}^{j}}{\Delta_{b}}$,
noting therefore that the costs do not depend on these probability
but only on the configuration $(x,r)$. Assuming \eqref{eq:maxDivers}
(or even only that the ratio $\frac{E_{\mathcal{P}_{b}}}{D_{I_{\mathcal{P}_{b}}}}$
is minimum) and setting $\alpha_{k}=0$, we have $\sum_{k=1}^{n_{b}}k^{-\alpha_{k}\cdot\frac{D_{I_{\mathcal{P}_{b}}}}{E_{\mathcal{P}}}}=n_{b}$.
Therefore, if at least $n_{b}\ge3>e$, Thm.~\ref{thm:powerLaw} yields
$q_{k}(y)=p_{b}^{k}(y)=p_{b}^{1}(y)=\frac{1}{n_{b}}=\frac{\phi_{b}^{a}}{\Delta_{b}}$.
This simple example shows how the general Thm.~\ref{thm:powerLaw}
can be applied to completely different settings.

It is clear that the result \eqref{eq:equiFluxes} does not correspond
to a realistic behavior in real-world economies: However, it gives
elements to start thinking at the GEP as a way to measure how far
our economy is from a stable, adaptive, resilient system.

\section{Conclusions and future developments}

Even if it does not agree with a purely formalistic point of view
in philosophy of mathematics, a mathematical definition must be validated
as well. This validation ranges from useful and general theorems linked
to this definition, to the inclusion of several interesting examples,
in our case examples of CAS. It is therefore a long process performed
by the interested scientific community. On the one hand, we followed
the ideas of G.K.~Zipf, \cite{Zip49}, which are nowadays informally
frequently used in different modeling of CS, see e.g.~\cite{Bak97,Cru12,Gab99,Hak04,Man53,Mi-Pa07,JaBaCr,Ni-Pr77,Sor06}
and references therein. The GEP can be explained and used even only
at an intuitive level and the corresponding formalization is simple
and corresponding to the intuition. This is already a form of validation.
On the other hand, the present paper is only the first step in this
validation process. For example, both the power law Thm.~\ref{thm:powerLaw},
or Thm.~\ref{thm:VTconf} and the results of Sec.~\ref{subsec:Maximization-of-diversification}
about von Thünen-like models of CS, viewed as sufficiently general
mathematical results applicable to large classes of CS, move in this
direction. Clearly, a better von Thünen's model (e.g.~dynamical,
stochastic also in the movement to different configurations $(x,r)$,
with explicit modeling of forces that allow the population to approach
the ideal relation \eqref{eq:equiFluxes}, with more than one markets,
where the strong assumption \eqref{eq:existGoodLocations} does not
hold, etc.), or the applications of the related theorems to different
CAS (e.g.~phyllotaxis?) are only some of possible improvements.

As far as we know, the GEP is the first mathematical definition of
CAS \emph{with a proved universal applicability} supported by the
mathematical embeddings of \cite{Gio24_IS}. Having a clear mathematical
notion could be of great advantage for the understanding and future
development of the notion of CAS. For example, already the intuitive
discussion to arrive at the GEP we had in Sec.~\ref{sec:Intuitive-description-GEP},
allowed us to recognize that the notion of CAS has to depend on validated
costs, suitable probabilities to average these costs and measure the
diversification forces, to identify a set of adaptive interactions,
etc. Even the notion of emergent pattern can be mathematically defined,
but can be reasonably considered less important than the general notion
of GEP, where real-world CAS may only adapt evolving from a given
state towards a better one.

\end{document}